\newcommand{\1}{\mathds{1}}
\newtheorem{theorem}{Theorem}[section]
\newtheorem{lemma}[theorem]{Lemma}
\newtheorem{meta-theorem}[theorem]{Meta-Theorem}
\newtheorem{corollary}[theorem]{Corollary}
\newtheorem{definition}[theorem]{Definition}
\definecolor{darkgreen}{rgb}{0,0.5,0}
\definecolor{darkblue}{rgb}{0,0,0.6}
\Crefname{remark}{Remark}{Remarks}
\Crefname{observation}{Observation}{Observations}
\algnewcommand\algorithmicswitch{\textbf{switch}}
\algnewcommand\algorithmiccase{\textbf{case}}
\newcommand{\eps}{\varepsilon}
\newcommand{\CONGEST}{$\mathsf{CONGEST}$\xspace}
\newcommand{\LOCAL}{$\mathsf{LOCAL}$\xspace}
\newcommand{\poly}{\operatorname{\text{{\rm poly}}}}
\newcommand{\set}[1]{\left\{#1\right\}}
\newcommand{\R}{\mathbb{R}}
\newcommand{\calL}{\mathcal{L}}
\newcommand{\calC}{\mathcal{C}}
\newcommand{\FullOrShort}{full}
  \newcommand{\fullOnly}[1]{#1}
  \newcommand{\shortOnly}[1]{}
    \newcommand{\fullOnly}[1]{}
    \newcommand{\IncludePictures}[1]{}
\begin{document}
\date{}
\title{Deterministic Distributed Vertex Coloring: \\ Simpler, Faster, and without Network Decomposition}
\author{
Mohsen Ghaffari \\
\small ETH Zurich \\
\small ghaffari@inf.ethz.ch
\and
Fabian Kuhn \\
\small University of Freiburg\\
\small kuhn@cs.uni-freiburg.de
}
\maketitle

\begin{abstract} We present a simple deterministic distributed algorithm that computes a $(\Delta+1)$-vertex coloring in $O(\log^2 \Delta \cdot \log n)$ rounds. The algorithm can be implemented with $O(\log n)$-bit messages. The algorithm can also be extended to the more general $(degree+1)$-list coloring problem. 

Obtaining a polylogarithmic-time deterministic algorithm for $(\Delta+1)$-vertex coloring had remained a central open question in the area of distributed graph algorithms since the 1980s, until a recent network decomposition algorithm of Rozho\v{n} and Ghaffari [STOC'20]. The current state of the art is based on an improved variant of their decomposition, which leads to an $O(\log^5 n)$-round algorithm for $(\Delta+1)$-vertex coloring. 

Our coloring algorithm is completely different and considerably simpler and faster. It solves the coloring problem in a direct way, without using network decomposition, by gradually rounding a certain fractional color assignment until reaching an integral color assignments. Moreover, via the approach of Chang, Li, and Pettie [STOC'18], this improved deterministic algorithm also leads to an improvement in the complexity of randomized algorithms for $(\Delta+1)$-coloring, now reaching the bound of $O(\log^3\log n)$ rounds.

As a further application, we also provide faster deterministic distributed algorithms for the following variants of the vertex coloring problem. In graphs of arboricity $a$, we show that a $(2+\eps)a$-vertex coloring can be computed in $O(\log^3 a\cdot\log n)$ rounds. We also show that for $\Delta\geq 3$, a $\Delta$-coloring of a $\Delta$-colorable graph $G$ can be computed in $O(\log^2 \Delta\cdot\log^2 n)$ rounds.
\end{abstract}
\setcounter{page}{0}
\thispagestyle{empty}

{   \newpage
    \smallskip
    \hypersetup{linkcolor=blue}
    \tableofcontents
    \setcounter{page}{0}
    \thispagestyle{empty}
}
\newpage

\section{Introduction}
Graph coloring has been one of the central problems in the area of \textit{distributed graph algorithms} for over three decades. See, e.g., the Distributed Graph Coloring book of Barenboim and Elkin~\cite{barenboimelkin_book}. In this paper, we present a surprisingly simple deterministic distributed algorithm that improves on the state of the art considerably, and also leads to a faster randomized distributed algorithm. We first review the model and the state of the art, and then state our contribution.

\smallskip
\subsection{Background on the Coloring Problem}
\paragraph{Model.}
We work with the standard synchronous message passing model of distributed algorithms. The network is abstracted as an $n$-node undirected graph $G=(V, E)$, where each node represents one processor and has a unique $O(\log n)$-bit identifier. Initially, nodes do not know the topology of the network graph $G$, except for each knowing its own neighbors. Besides this, the nodes might know some global parameters (or parameter estimates), such as  the number of nodes $n$ and the maximum degree $\Delta$ (or suitable upper bounds on them). 
The communication between the processors/nodes happens in synchronous rounds, where per round each node can send one message to each of its neighbors in $G$. The variant where the messages are allowed to be of unbounded size is known as the \LOCAL model~\cite{linial1987LOCAL,peleg00} while the variant with bounded-size messages, usually $O(\log n)$ bits, is referred to as the \CONGEST model~\cite{peleg00}. At the end, each processor should know its own part of the output, e.g., in the coloring problem, the color assigned to its node. The main complexity measure for the algorithms is the number of rounds until all nodes know their output.

\smallskip
 \paragraph{State of the Art.} Here, we give a brief overview. We defer a more exhaustive review of the related work to \Cref{subsec:related}. 
 
 The celebrated work of Luby~\cite{luby86} and Alon, Babai, and Itai\cite{alon86} from the 1980s provide randomized distributed algorithms for $(\Delta+1)$-coloring with round complexity $O(\log n)$---in fact, even for the harder problem of maximal independent set. However, obtaining a deterministic algorithm with a similar round complexity has remained elusive. In his seminal work\cite{linial1987LOCAL,linial92}, and after discussing \cite{luby86, alon86}, Linial wrote 

\medskip
\begin{centering}
\lq\lq\textit{It is therefore particularly interesting to find out the best time complexity in terms
of $n$ for finding a ($\Delta$ + 1)-coloring, and in particular whether polylogarithmic time
suffices.}\rq\rq
\end{centering}
\medskip

The sentence, and especially the latter part, refers to deterministic algorithms. Shortly after, deterministic algorithms with round complexity $2^{O(\sqrt{\log n \cdot \log\log n})}$ and then $2^{O(\sqrt{\log n})}$ were presented by Awerbuch et al.\cite{awerbuch89} and Panconesi and Srinivasan\cite{panconesi-srinivasan}, respectively. These algorithms are based on a generic tool known as network decomposition, where the graph is decomposed into clusters of small diameter, which are colored with few colors and thus can then be processed efficiently in the \LOCAL model.
However, the question of obtaining a $\poly(\log n)$ time algorithm remained open for a long time. Indeed, Open Problem 11.3 in the influential 2013 book of Barenboim and Elkin on \textit{distributed graph coloring} asked for an even weaker objective: 

\medskip
\begin{centering}
 \lq\lq\emph{\textbf{Open Problem 11.3} Devise a $\Delta \cdot \poly(\log(\Delta))$-coloring in deterministic polylogarithmic time.}\rq\rq
\end{centering}
\medskip

The first resolution of this question was presented recently: Rozho\v{n} and Ghaffari~\cite{rozhonghaffari20} presented an $O(\log^7 n)$ round deterministic algorithm in the \LOCAL model that solves the more general network decomposition problem and this led to an $O(\log^7 n)$ round deterministic algorithm for $(\Delta+1)$-coloring. An improved variant of this network decomposition algorithm was presented more recently by Grunau, Ghaffari, and Rozho\v{n}~\cite{GGR20}, which for $(\Delta+1)$-coloring implies a $O(\log^5 n)$ round algorithm. Both of these coloring algorithms use large messages, as they end up gathering the topology around some node. Bamberger, Kuhn, and Maus~\cite{bamberger2020coloring} resolved this issue by presenting a \CONGEST-model  coloring algorithm for low-diameter graphs. This is inspired by the derandomization approach of Luby~\cite{luby1993removing} and Censor-Hillel et al.~\cite{censor2017derandomizing} of reducing the amount of randomness to $O(\log^2 n)$ bits and then fixing the bits one by one, via pessimistic estimators and global communication. Put together with the faster network decomposition algorithm~\cite{GGR20}, this gives an $O(\log^6 n)$ round deterministic algorithm for $(\Delta+1)$-coloring. This is the state of the art deterministic algorithm for $(\Delta+1)$-coloring in the \CONGEST model.

On the side of randomized algorithms, there has also been much progress (as we shall review later in \Cref{subsec:related}). Interestingly, the state of the art in the \LOCAL model is an $O(\log^5 \log n)$-round algorithm that follows from plugging the deterministic algorithm mentioned above in the randomized framework of Chang et al.~\cite{chang2018optimal}. It is also known that any improvement on the randomized complexity requires (and would imply) an improvement on the deterministic complexity~\cite{chang2017time}.

\subsection{Our Contribution} We present a surprisingly simple algorithm that solves the $(\Delta+1)$-vertex coloring directly, without using the recent breakthrough results on network decomposition (or their ideas)~\cite{rozhonghaffari20,GGR20}. Besides being considerably simpler than the state of the art, the algorithm is also quadratically faster and runs in $O(\log^2 \Delta \log n)$ rounds, with small $O(\log n)$-bit messages. It also extends to the list coloring generalization of the problem.
\begin{theorem}[\textbf{Informal}]\label{thm:main_informal}
There is a deterministic distributed algorithm that computes a $(\Delta+1)$-coloring in any graph with at most $n$ nodes and maximum degree at most $\Delta$  in $O(\log^2 \Delta\cdot \log n)$ rounds, and using $O(\log n)$ bit messages. 

The algorithm can be generalized to the $(\mathit{degree}+1)$-list coloring problem where each node $v$ should choose its color from a prescribed list of colors $L_v\subseteq\{1, ..., \mathcal{C}\}$ with size $|L_v|\geq d(v)+1$, where $d(v)$ denotes the degree of node $v$. Then, the complexity is $O(\log^2 \Delta\cdot \log n)$ rounds if we can use $O(\Delta\cdot \log \mathcal{C})$-bit messages, or $O(\log^2 \mathcal{C}\cdot \log n)$ rounds using $O(\log \mathcal{C})$-bit messages. 
\end{theorem}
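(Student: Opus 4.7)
The plan is to design an algorithm that maintains a fractional color assignment at each node---a nonnegative weight vector $p_v : L_v \to [0,1]$ normalized so that $\sum_c p_v(c) = 1$, initialized uniformly as $p_v(c) = 1/|L_v|$. Since $|L_v| \geq d(v)+1$, the initial weight on every color at $v$ is at most $1/(d(v)+1)$, so a natural ``edge conflict'' quantity such as $\sum_c p_u(c)\,p_v(c)$ on each edge $(u,v)$ is small. The main invariant to maintain throughout is that a suitable pessimistic estimator, built from these edge conflict quantities, stays controlled. The algorithm gradually concentrates each $p_v$ onto fewer colors until it is supported on a single color, which $v$ then outputs.

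The core subroutine is one rounding phase that, roughly, halves the support of $p_v$ at every node $v$. I would implement this via a coordinated random partition of the color palette into two halves, with each node redistributing its weight onto the half it commits to. To ensure that the conflict mass grows by at most a constant factor per phase, the randomness has to be derandomized. I would use an $O(\log n)$-bit seed (via $O(\log n)$-wise independence or a small-bias distribution) and fix it one bit at a time by the method of conditional expectations, driven by a locally computable pessimistic estimator. Because the target is to avoid network decomposition, the estimator should decompose additively across edges so that partial sums can be aggregated over depth-$O(\log \Delta)$ structures rather than over global low-diameter clusters. Under this structure, one rounding phase costs $O(\log \Delta \cdot \log n)$ rounds, and $O(\log \Delta)$ phases drive $|\operatorname{supp}(p_v)|$ from at most $\Delta+1$ down to $1$, yielding the claimed $O(\log^2 \Delta \cdot \log n)$ bound.

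The list-coloring generalization follows the same template. When nodes can exchange full local weight vectors---messages of $O(\Delta \log \mathcal{C})$ bits---each phase still manipulates only the at most $d(v)+1 \leq \Delta+1$ colors with positive weight at $v$, so the complexity remains $O(\log^2 \Delta \cdot \log n)$. With only $O(\log \mathcal{C})$-bit messages, weights must be exchanged one color at a time, which blows the per-phase cost up to $O(\log \mathcal{C} \cdot \log n)$ and requires $O(\log \mathcal{C})$ phases, matching the second bound $O(\log^2 \mathcal{C} \cdot \log n)$.

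The main obstacle I anticipate is the design of the pessimistic estimator: it must (i) upper-bound the probability that some edge ends up monochromatic, (ii) be locally computable so that one-bit-at-a-time derandomization runs in the \CONGEST model \emph{without} a precomputed network decomposition, and (iii) degrade by at most a bounded factor per rounding step so that $O(\log \Delta)$ phases suffice. Meeting these three constraints simultaneously---in particular, getting locality essentially for free instead of paying for a low-diameter decomposition---is, I expect, the technical heart of the argument; once such an estimator is in hand, conditional expectations plus routine \CONGEST-model aggregation should close the proof.
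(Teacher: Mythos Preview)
Your high-level framework---start from a uniform fractional assignment, define an edge-additive ``conflict'' cost, and round to integral in $O(\log\Delta)$ halving phases while keeping the cost under control---matches the paper. But the derandomization mechanism you propose is precisely the one the paper is designed to \emph{avoid}, and the gap you flag as ``the main obstacle'' is real and unfilled in your sketch.

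Fixing an $O(\log n)$-bit global seed one bit at a time by conditional expectations requires, at each bit, computing the \emph{global} value of the pessimistic estimator under both settings of that bit. An edge-decomposable estimator is necessary but not sufficient: you still need to aggregate $\sum_{e\in E}$ over the whole graph, and there is no ``depth-$O(\log\Delta)$ structure'' that lets you do this without low-diameter clustering. This is exactly the Luby/Censor-Hillel route used in \cite{bamberger2020coloring}, and it is why those results need network decomposition. Your per-phase cost of $O(\log\Delta\cdot\log n)$ thus has no justification; with the mechanism you describe, each bit-fixing step costs diameter, not $O(\log\Delta)$.

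The paper's actual idea is different and local. For one halving step (from $1/(2Q)$-integral to $1/Q$-integral), first compute a \emph{weighted average defective} $O(1/\eps)$-coloring in which the monochromatic edges carry at most an $\eps$-fraction of the current total cost (\Cref{lemma:recursiveavgdefective}). On the bichromatic subgraph this is a \emph{proper} $O(1/\eps)$-coloring, so one can round color-class by color-class: within each class the nodes are independent, and each node greedily decides which half of its odd fractional values to round up and which to round down so that the cost on its incident bichromatic edges does not increase (\Cref{lemma:basicrounding}). The monochromatic edges can blow up by at most a factor $4$ (since each fractional value at most doubles), contributing at most $4\eps$ times the old cost. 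Taking $\eps=\Theta(1/\log\Delta)$ gives $O(\log\Delta)$ rounds per halving step and a $(1+O(1/\log\Delta))$ multiplicative loss; after $O(\log\Delta)$ steps the assignment is integral with cost $O(n)$, so a constant fraction of nodes can be permanently colored. The outer $O(\log n)$ loop then finishes the job. No shared seed, no global aggregation, no decomposition. Your CONGEST sketch (``exchange weights one color at a time'') is also off: the paper instead partitions the color space hierarchically into $K=\Theta(\sqrt{\log\calC})$ parts per level and rounds the choice of part, so that the label set stays tiny and messages remain $O(\log\calC)$ bits.
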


Besides its simplicity and faster round complexity, our result is qualitatively different than prior work. For instance, using the ingredients of \Cref{thm:main_informal}, we can color a $1-\epsilon$ fraction of nodes for an arbitrarily small constant $\eps>0$ in $O(\log^2 \Delta+\log^* n)$ rounds of the \LOCAL model. Notably, this is nearly-independent of the network size $n$, and has only logarithmic dependencies on the maximum degree $\Delta$. In contrast, with the previous algorithm that follows from \cite{rozhonghaffari20, GGR20}, coloring even just an $\eps$-fraction of the nodes requires $\Theta(\log^4 n)$ rounds. A similar difference extends to the results in the \CONGEST model. 

Moreover, our approach provides a new useful structural understanding of the coloring problem, which we hope might find applications in other computational settings. The previous method was based on network decomposition~\cite{rozhonghaffari20, GGR20} and essentially relied on breaking the graph into low-diameter parts and then solving the coloring in each part by gathering the entire topology. In that sense, those methods, despite providing an efficient solution of coloring in the distributed model, did not provide any new structural understanding of the coloring problem. As we will outline in \Cref{subsec:method}, our approach is completely different and it allows us to cast the $(\Delta+1)$-coloring problem as a few instances of a simple and clean \textit{rounding problem}, which gradually turn fractional color assignments into integral color assignments, while approximately maintaining a simple quality measure. We believe that this structure is of independent interest and we are hopeful that it will find applications in other computational settings.


\paragraph{Other implications, randomized coloring.} By plugging our deterministic list-coloring algorithm into the randomized coloring algorithm of Chang et al.\cite{chang2018optimal}, we can also improve the randomized complexity from the $O(\log^5\log n)$ bound of \cite{GGR20} to $O(\log^3\log n)$:
\begin{corollary}\label{crl:randomizedColoring}
There is a randomized algorithm in the \LOCAL model that computes a $(\Delta+1)$-coloring in any graph with at most $n$ nodes and maximum degree at most $\Delta$  in $O(\log^3\log n)$ rounds, with high probability\footnote{As standard, we use the phrase \textit{with high probability} to indicate that an event happens with probability at least $1-1/n^{c}$ for a desirably large constant $c\geq 2$.}.
\end{corollary}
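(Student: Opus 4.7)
The plan is to apply the graph shattering framework of Chang, Li, and Pettie, which reduces randomized $(\Delta+1)$-coloring on an $n$-node graph to solving deterministic $(\mathit{degree}+1)$-list coloring on instances whose size is polylogarithmic in $n$. In the first phase, a randomized procedure running in roughly $O(\log^* n)$ rounds colors enough of the graph that, with high probability, the uncolored subgraph decomposes into connected components each of size at most $n' = \mathrm{poly}(\log n)$. Crucially, after this phase each remaining node $v$ has an induced list of available colors, namely those colors not taken by its already-colored neighbors, of size at least $d_H(v)+1$, where $d_H(v)$ is the number of $v$'s still-uncolored neighbors. Thus we obtain a $(\mathit{degree}+1)$-list coloring instance on components of polylogarithmic size.

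In the second phase, I would invoke the deterministic $(\mathit{degree}+1)$-list coloring algorithm of \Cref{thm:main_informal} (the \LOCAL-model version that runs in $O(\log^2 \Delta \cdot \log n)$ rounds) on each remaining component. Since each component has $n' = \mathrm{poly}(\log n)$ vertices, its maximum degree is bounded by $\Delta' \leq n' = \mathrm{poly}(\log n)$; hence $\log n' = O(\log\log n)$ and $\log \Delta' = O(\log\log n)$. The resulting complexity per component is
\[
O\bigl(\log^2 \Delta' \cdot \log n'\bigr) \;=\; O\bigl((\log\log n)^2 \cdot \log\log n\bigr) \;=\; O(\log^3\log n).
\]
Because the components can be processed in parallel in the \LOCAL model, this is also the global complexity of the second phase, and combining the two phases yields a total of $O(\log^* n) + O(\log^3 \log n) = O(\log^3 \log n)$ rounds with high probability.

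The main conceptual work has already been done by the Chang--Li--Pettie reduction; this corollary is essentially a plug-and-play improvement in which \Cref{thm:main_informal} replaces the previously best list-coloring subroutine of \cite{GGR20} within the same framework. The only mild points to verify, both standard parts of the shattering machinery, are: (i) that after the shattering phase every uncolored node indeed satisfies the $(\mathit{degree}+1)$ condition with respect to its uncolored neighbors, so that our list-coloring theorem is directly applicable; and (ii) that simultaneous components, even if geographically close in the original graph, can each execute the deterministic list-coloring algorithm in parallel within the stated round bound, since the algorithm of \Cref{thm:main_informal} is purely local and its running time depends only on the size and maximum degree of the instance it is run on.
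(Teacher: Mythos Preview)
Your proposal is correct and follows essentially the same approach as the paper's own proof sketch: invoke the Chang--Li--Pettie shattering phase to reduce to components of size $\mathrm{poly}(\log n)$ (the paper in fact uses the sharper $O(\log n)$ bound, but your weaker bound suffices), observe that the residual problem is a $(\mathit{degree}+1)$-list coloring instance, and plug in the deterministic algorithm of \Cref{thm:main_informal} on each component to obtain $O(\log^3\log n)$ rounds. The two verification points you flag are exactly the ones the paper addresses, and your treatment of them is adequate.
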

We note that in a very recent paper, Halld\'{o}rsson, Nolin, and Tonoyan~\cite{HalldorssonNT21} give an improved randomized \CONGEST algorithm for $(\Delta+1)$-coloring. Their algorithm can use the deterministic \CONGEST algorithm that we give in \Cref{thm:main_informal} and by applying our result, they show that $(\Delta+1)$-coloring can also be solved in $O(\log^3\log n)$ rounds in the randomized \CONGEST model.

\paragraph{Other implications, coloring low-arboricity graphs and $\Delta$-coloring.} We also provide a variant of \Cref{thm:main_informal} with node weights ---e.g., coloring a subset of nodes with a constant fraction of the weights in $O(\log^2 \Delta)$ rounds of the \LOCAL model and $O(\log^2 \mathcal{C})$ rounds of the \CONGEST model--- and show that this leads to improvements for coloring graphs of low arboricity and for $\Delta$-coloring. In particular, in graphs of arboricity $a$, we show that a $(2+\eps)a$-vertex coloring can be computed deterministically in $O(\log^3\Delta\cdot\log n)$ rounds. We also show that for $\Delta\geq 3$, a $\Delta$-coloring of a $\Delta$-colorable graph $G$ can be computed deterministically in $O(\log^2\Delta\cdot\log^2 n)$ rounds. See \Cref{crl:ArbColoring} and \Cref{crl:DeltaColoring} for precise statements.

\subsection{Our Method in a Nutshell}\label{subsec:method}

We start by giving a high-level description of our algorithm for the \LOCAL model. The algorithm is based on the most basic randomized coloring method where each node repeatedly tries to get colored by choosing a uniformly random available color. More precisely, consider an edge $\set{u,v}$ between two nodes $u$ and $v$ of degree $d(u)$ and $d(v)$ and assume that $u$ and $v$ have color lists $L_u$ of size $|L_u|>d(u)$ and $L_v$ of size $|L_v|>d(v)$, respectively. If $u$ chooses a uniformly random color from $L_u$ and $v$ picks a uniformly random color from $L_v$, the probability that both nodes pick the same color is less than $1/\max\set{d(u)+1,d(v)+1}$. If every node picks a uniformly random color from its list, the expected number of monochromatic edges is less than $n$. If we are given a coloring with $O(n)$ monochromatic edges, it is simple to find a subset $S$ of nodes of size $\Theta(n)$ such that the induced subgraph $G[S]$ is properly colored.\footnote{A constant fraction of the nodes have at most O(1) monochromatic edges and of those nodes, creating a conflict graph with $\Theta(n)$ nodes and bounded degree. By picking an MIS of this conflict graph in $O(\log^* n)$ rounds~\cite{linial1987LOCAL,BEK15}, we obtain the required set $S$ of nodes that can keep their color.} The main goal of our paper is to develop an efficient deterministic distributed algorithm to assign a color $x_u\in L_u$ to each node $u$ such that the number of monochromatic edges is $O(n)$ and we can thus color a constant fraction of the nodes. Repeating $O(\log n)$ times then solves the problem of coloring all nodes of the graph.

Our general approach is based on the idea of rounding fractional assignments to integral assignments. We note that rounding ideas have been used successfully in the past for the maximal matching problem~\cite{fischer2020improved}. In the case of the coloring problem, we start with a fractional assignment of colors to the nodes and we gradually round this fractional solution to obtain an assignment of a single color to each node. For a node $u$, a fractional color assignment can just be thought of as a probability distribution over the colors in $u$'s list $L_u$. We define the cost of a fractional assignment as the expected number of monochromatic edges in the setting where each node $u$ independently picks a color according to this distribution. We say that a fractional color assignment to a node $u$ is $1/Q$-fractional if each color $c\in L_u$ is assigned to $u$ with value $a/Q$ for some integer $a\in \set{0,\dots,Q}$. Initially, each node $u$ computes an initial $1/\Theta(d(u))$-integral color assignment in which each color of $L_u$ obtains approximately equal values. As observed above, the total cost of such an assignment is $O(n)$. Now, assume that we are given a $1/(2Q)$-integral fractional color assignment. We want to turn this color assignment into an $1/Q$-integral fractional color assignment such that the expected number of monochromatic edges (essentially) does not increase. If the fractional assignments of nodes are rounded one node at a time, each node can simply decide which half of the $1/(2Q)$-fractional values to round up by $1/(2Q)$ and which half of those values to round down by $1/(2Q)$ so that the expected cost of its edges does not increase. This can be parallelized in the following way. The rounding decisions of nodes that are non-adjacent do not depend on each other. If we are given a coloring of $G$ with $C$ colors, nodes of the same color class can thus be rounded in parallel and obtain an $1/Q$-integral assignment in $O(C)$ rounds. In order for this to be useful, we however need to speed up this process significantly.

This is where we use the fact that we only do gradual rounding steps. The rounding from $1/(2Q)$-integral values to $1/Q$-integral values guarantees that each fractional value increases at most by a factor $2$. Therefore, the probability of an edge becoming monochromatic can increase at most by a factor $4$, even if the rounding is done in a worst-case way. We define edge weights $w(e)$, where $w(e)$ is the probability of the edge becoming monochromatic. By using a defective coloring algorithm of \cite{Kuhn2009WeakColoring,KawarabayashiS18}, for any parameter $\eps>0$, we can efficiently compute an $O(1/\eps^2)$-coloring of $G$ such that the total weight of monochromatic edges is at most an $\eps$-fraction of the total weight of all the edges. In \Cref{sec:defective_algorithms}, we further show that because we only need to bound the overall weight of monochromatic edges, we can even reduce the number of colors to $O(1/\eps)$. Let $E'$ be the set of monochromatic edges of this defective coloring. Because the edges in $E'$ only have an $\eps$-fraction of the total weight of all edges, even if nodes round their fractional values in a worst-case way, the total cost of the edges in $E'$ is still at most a $4\eps$-fraction of the overall cost of the original fractional assignment. On the remaining graph $G'=(V,E\setminus E')$, we have a proper $O(1/\eps)$-coloring and we can therefore round the fractional values in time $O(1/\eps)$ without loss in the quality of the fractional solution. Overall, we can get from a $1/(2Q)$-integral fractional assignment to a $1/Q$-fractional integral assignment in time $O(1/\eps)$ and at the cost of losing a $1+O(\eps)$-factor in the overall cost. By choosing $\eps=1/\log\Delta$, we can do $O(\log\Delta)$ rounding steps to get from a $1/O(\Delta)$-integral fractional assignment to an integral assignment such that the total number of monochromatic edges only increases by a factor of $(1+O(\eps))^{O(\log\Delta)}=O(1)$. The total running time is $O(\log^2\Delta)$, repeating $O(\log n)$ times gives the claimed $O(\log^2\Delta\cdot\log n)$-time bound for $(\mathit{degree}+1)$-list coloring in the \LOCAL model.

The above algorithm requires the \LOCAL model because for the rounding, nodes have to learn the complete fractional color assignment of their neighbors. For the \CONGEST model, we build on ideas that were recently developed in \cite{kuhn2020faster,bamberger2020coloring}. Assume that all nodes have a list consisting of colors from a range $\calC$. Consider an arbitrary (fixed) partition of the color space $\calC$ into a small number of parts $\calC_1,\dots,\calC_k$ of size approximately $|\calC_i|\approx |\calC|/k$. Instead of directly coloring the nodes, the goal for each node $v$ is to pick one of the parts $i\in\set{1,\dots,k}$ and update its list to $L_v\cap \calC_i$. Node $v$ then only remains in conflict with the neighbors that also pick the same color subspace $\calC_i$ and the goal is to find an assignment of color subspaces such that on average over all nodes, the ratio between remaining degree and list size does not grow by more than a $1+\eps$ factor for an appropriate choice of $\eps$. In \cite{bamberger2020coloring}, it is shown that a simple randomized choice of color subspace $\calC_i$ solves this problem in expectation. By using a generalization of the rounding process described above for coloring in the \LOCAL model, we can efficiently compute a good assignment of color spaces. With the right choice of parameters, this leads to a $(\mathit{degree}+1)$-coloring algorithm with a time complexity of $O(\log^2|\calC|\cdot \log n)$, using messages of only $O(\log|\calC|)$ bits.

\subsection{Other Related Work}\label{subsec:related}
Graph coloring and variants of it have been at the center of studies in distributed algorithms for over three decades and there is a vast amount of related work on this topic. Here, we provide a brief (and certainly not exhaustive) review of the most relevant work. We refer to the 2013 \textit{Distributed Graph Coloring} book of Barenboim and Elkin~\cite{barenboimelkin_book} for more information. 

\paragraph{Deterministic algorithms, focusing on the \boldmath$n$-dependency.}
We start with a review of the literature focusing on the $n$-dependency in the complexity. Linial~\cite{linial1987LOCAL} gave a deterministic $O(\Delta^2)$-coloring algorithm with round complexity $O(\log^*n)$. For the precise $(\Delta+1)$-coloring objective, Awerbuch et al.~\cite{awerbuch89} gave an $2^{O(\sqrt{\log n\log\log n})}$ round algorithm via network decomposition, which was improved shortly after to $2^{O(\sqrt{\log n})}$ rounds by Panconesi and Srinivasan\cite{panconesi-srinivasan}. In terms of polylogarithmic-time algorithms, the first significant progress was made by Barenboim and Elkin who gave an $O(\Delta^{1+o(1)})$ coloring in $\poly(\log n)$ rounds. 

Since then, progress on the $(\Delta+1)$-vertex coloring problem remained elusive, except for a recent $2^{O(\sqrt{\log \Delta})} \log n$-round algorithm of Kuhn~\cite{kuhn2020faster}. But much more progress, and especially in the polylogarithmic complexity regime, was made on the simpler problem of $(2\Delta-1)$ edge coloring (which is the special case of vertex coloring if we take the line graph) and its tighter variants: Ghaffari and Su gave a $\poly(\log n)$ round algorithm for $(2\Delta-1)(1+o(1))$ edge coloring. Fischer, Ghaffari, and Kuhn~\cite{FischerGK17} gave a $\poly(\log n)$ round algorithm for $2\Delta-1$ coloring. Ghaffari et al.~\cite{det_1+eps_edgecoloring} gave a $\poly(\log n)$ algorithm for $(1+o(1))\Delta$ edge coloring assuming $\Delta=\Omega(\log n)$ and a $3\Delta/2$ coloring for general graphs. Improving on the polylogarithmic complexity, Harris~\cite{harris2019distributed} gave an algorithm with round complexity $\tilde{O}(\log^2 \Delta) \cdot O(\log n)$ for $(2\Delta-1)$-edge coloring. All these developments remained confined to the edge-coloring problem and could not be extended to the harder vertex coloring problem.

The question of whether a $\poly(\log n)$ round deterministic algorithm for $(\Delta+1)$-coloring exists was finally resolved in 2020: Rozho\v{n} and Ghaffari~\cite{rozhonghaffari20} gave a $O(\log^7 n)$ round algorithm for network decomposition, which led to an algorithm with the same complexity for $(\Delta+1)$-coloring, in the \LOCAL model. Bamberger et al.~\cite{bamberger2020coloring} showed how to solve coloring using small messages in low-diameter graphs and, combining this with the network decomposition of \cite{rozhonghaffari20}, obtained a $\poly(\log n)$ round algorithm in the \CONGEST model for $(\Delta+1)$-coloring. Most recently, the network decomposition algorithm was improved in~\cite{GGR20} to complexity $O(\log^5 n)$. For the $(\Delta+1)$ coloring problem, this implied an $O(\log^5 n)$ round algorithm in the \LOCAL model and a fairly involved $O(\log^6 n)$ algorithm in the \CONGEST model.  

\paragraph{Deterministic algorithms, focusing on the \boldmath$\Delta$-dependency.} We now review the literature with a focus on the $\Delta$-dependency. Linial's $O(\log^* n)$ round $O(\Delta^2)$-coloring algorithm directly translates to an $O(\Delta^2 + \log^* n)$ round algorithm for $(\Delta+1)$-coloring. This complexity was improved to $O(\Delta \log \Delta + \log^* n)$ by Kuhn and Wattenhofer~\cite{Kuhn2006On}, and subsequently to $O(\Delta + \log^* n)$ by Kuhn~\cite{Kuhn2009WeakColoring} and independently by Barenboim and Elkin~\cite{barenboim10}. Recently, Barenboim, Elkin, and Goldenberg~\cite{BEG18} presented an alternative (locally-iterative) algorithm with the same $O(\Delta + \log^* n)$ complexity. The complexity was improved to $\tilde{O}(\Delta^{3/4}) + O(\log^* n)$ by Barenboim~\cite{barenboim2016deterministic}, and then to $\tilde{O}(\Delta^{1/2}) + O(\log^* n)$ by Fraigniaud, Heinrich, and Kosowski~\cite{fraigniaud16}. The bound has remained around this complexity, with the exception of the following improvements: the aforementioned work of Barenboim et al.\cite{BEG18} improved the bound to $O(\sqrt{\Delta\log\Delta} \log^* \Delta + \log^* n)$ and Maus and Tonoyan~\cite{MausTonoyan20} sharpened this to $O(\sqrt{\Delta \log \Delta} + \log^* n)$ rounds. These algorithms with $\tilde{O}(\Delta^{1/2}) + O(\log^* n)$ complexity use large messages (roughly, $O(\Delta)$ bits).   
For the $(2\Delta-1)$ edge coloring problem, which is special case of $(\Delta+1)$-vertex coloring, we now know much better bounds: Kuhn~\cite{kuhn2020faster} gave an $2^{\sqrt{\log \Delta}} + O(\log^* n)$ round algorithm, and the bound was recently improved to $\log^{O(\log\log \Delta)}\Delta + O(\log^* n)$ rounds by Balliu, Kuhn, and Olivetti~\cite{edgecoloring_quasipolylog}.

\paragraph{Randomized algorithms, focusing on the \boldmath$n$-dependency.}
As mentioned at the beginning, the celebrated algorithms of Luby~\cite{luby86} and Alon, Babai, and Itai~\cite{alon86} for maximal independent set gave an $O(\log n)$ round algorithm for $(\Delta+1)$-coloring, and using $O(\log n)$ bit messages. Barenboim, Elkin, Pettie, and Schneider~\cite{barenboim2016locality} improved the complexity to $O(\log \Delta) + 2^{O(\sqrt{\log \log n})}$. The complexity was further improved by Harris, Su, and Schneider~\cite{harris2016coloring} to $O(\sqrt{\log \Delta}) + 2^{O(\sqrt{\log \log n})}$, and then by Chang, Li, and and Pettie~\cite{chang2018optimal} to $O(\log^* \Delta) + 2^{O(\sqrt{\log\log n})}$. In all these results, the latter term comes from the complexity of solving coloring (in fact, the harder variant of $degree+1$ list coloring) in graphs with $O(\log n)$ nodes. And indeed, it is known that any faster randomized algorithm would imply a faster deterministic algorithm~\cite{chang2017time}. That term was improved with the network decomposition result of Rozho\v{n} and Ghaffari~\cite{rozhonghaffari20} to $\poly(\log\log n)$. The best known upper bound on the complexity for randomized $(\Delta+1)$-coloring in the \LOCAL model, prior to our work, is $O(\log^5\log n)$ rounds and follows from the improved network decomposition of \cite{GGR20} combined with the randomized coloring algorithm of Chang et al.~\cite{chang2018optimal}. The same round complexity has recently also been achieved in \cite{HKMT21} for the \CONGEST model. An even more recent paper by Halld\'{o}rsson, Nolin, and Tonoyan~\cite{HalldorssonNT21} gives an even faster randomized \CONGEST algorithm for $(\Delta+1)$-coloring. Their algorithm uses the deterministic \CONGEST algorithm that we give in this paper to improve the round complexity of the problem to $O(\log^3\log n)$, matching the best in the \LOCAL model that we give in \Cref{crl:randomizedColoring}.

\subsection{Mathematical Preliminaries and Outline}

For a graph $G=(V,E)$ and a node $v\in V$, we use $N(v)$ to denote the set of neighbors of $v$ and we use $d(v)=|N(v)|$ to denote the degree of $v$. For an integer $k\geq 1$, we define $[k]:=\set{1,\dots,k}$. Further, unless specified otherwise, when writing $\log x$, we always mean $\log_2 x$.

The remainder of the paper is organized as follows. In \Cref{sec:defective_results}, we give formal definitions of the type of weighted defective colorings we use in our algorithms and we state the formal defective coloring results we use. In \Cref{sec:labelings}, we define and analyze a problem, where each node needs to choose a label from a given domain and there is a cost associated with each edge, depending on the edge and the labels assigned to both nodes of the edge. We in particular show that a fractional assignment of labels can be efficiently rounded to an integral labeling at only a small increase in cost. Finally in \Cref{sec:algorithms}, we show how this rounding procedure can be used to obtain our coloring algorithms and prove \Cref{thm:main_informal}.

\section{Weighted Defective Coloring}
\label{sec:defective_results}

In this section, we define the notion of weighted defective coloring, review the known algorithms for it, and then state our faster algorithm for a relaxation of this problem, which we call weighted average defective coloring. This is a basic tool that we will use throughout the algorithm presented in the next section. We note that the main novelty of our approach lies in the next sections. The following weighted defective coloring variant has in particular been studied in \cite{KawarabayashiS18}.

\begin{definition}[Weighted Defective Coloring]\label{def:defcoloring}
  Given a weighted graph $G=(V,E,w)$ with non-negative edge weights $w(e)\geq 0$ for all $e\in E$, a parameter $\eps>0$, and an integer $C\geq 1$, a weighted $\eps$-relative defective $C$-coloring of $G$ is an assignment $\varphi:V\to [C]$ of colors in $[C]$ to the vertices of $G$ such that
  \[
  \forall u\in V\,:\,\sum_{v\in N(u)}\1_{\set{\varphi(u)=\varphi(v)}}\cdot
  w(u,v) \leq \eps\cdot \sum_{v\in N(u)}w(u,v).
  \]
\end{definition}

\paragraph{Known Algorithms for Weighted Defective Coloring.} Kuhn\cite{Kuhn2009WeakColoring} showed that in unweighted graphs, one can compute an $\eps$-relative coloring with $O(1/\eps^2)$ colors in $O(\log^* q)$ rounds of the \CONGEST model, if one is given an initial proper $q$-coloring of the graph. Kawarabayashi and Schwartzman~\cite{KawarabayashiS18} observed that essentially the same algorithm and analysis can be also be generalized to the weighted case (for a formal statement, see \Cref{lemma:defcoloring} in \Cref{sec:defective_algorithms}). 

\paragraph{Relaxing to Weighted Average Defective Coloring and Faster Algorithms.} We use weighted defective coloring as a basic subroutine in our algorithms. In fact, we can use even a weaker version of weighted defective coloring, where the defect is only bounded on average. For this weaker notion, we present a faster algorithm, i.e., one with a better trade-off between the weighted defect and the number of colors.

\begin{definition}[Weighted Average Defective Coloring]\label{def:avgdefcoloring}
  Given a weighted graph $G=(V,E,w)$ with non-negative edge weights $w(e)\geq 0$ for all $e\in E$, a parameter $\eps>0$, and an integer $C\geq 1$, a weighted average $\eps$-relative defective $C$-coloring of $G$ is  an assignment $\varphi:V\to [C]$ of colors in $[C]$ to the nodes of $G$ such that
  \[
  \sum_{u\in V}\,\,\sum_{v\in N(u)}\1_{\set{\varphi(u)=\varphi(v)}}\cdot
  w(u,v) \leq \eps\cdot\sum_{u\in V}\sum_{v\in N(u)} w(u,v).
  \]
\end{definition}

The above definition is equivalent to requiring that at most an $\eps$-fraction of the total weight of all the edges of $G$ is on monochromatic edges. 

The following lemma shows that for every integer $C\geq1$, there is an efficient distributed algorithm, with round complexity essentially linear in $C$, that computes a $C$-coloring such that the average weighted defect per node $v$ is almost equal to $1/C$ times the total weight of the edges of $v$. The proof of \Cref{lemma:recursiveavgdefective} can be obtained by adapting relatively standard techniques and it therefore appears at the end of the paper in \Cref{sec:defective_algorithms}.

\begin{restatable}[Weighted Average Defective Coloring]{lemma}{LemmaRecAvgDefective}
\label{lemma:recursiveavgdefective}
 Let $G=(V,E,w)$ be a weighted graph with non-negative edge weights and assume that we are given a proper vertex coloring of $G$ with colors in $[q]$. Then, for every integer $C\geq 1$ and every $\delta>0$, there is a deterministic $O(C/\delta + \log^*q)$-round algorithm to compute a weighted average $(1+\delta)/C$-relative defective $C$-coloring of $G$. The algorithm requires messages of size $O(\log q)$ bits.
\end{restatable}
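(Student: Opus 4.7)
The plan is to derandomize the natural random $C$-coloring (each node picks a color in $[C]$ uniformly and independently) via the method of conditional expectations (MOCE). Set $W := \sum_e w(e)$ and let $D := \sum_u \sum_{v \in N(u)} \1\{\varphi(u)=\varphi(v)\} w(u,v)$ denote the defect quantity from \Cref{def:avgdefcoloring}. Under uniform random coloring, $\E[D] = 2W/C$, so the goal is to output a deterministic coloring with $D \leq (1+\delta)\cdot 2W/C$.

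My algorithm would have two phases. First, I would run the $O(\log^* q)$-round weighted defective coloring of \Cref{lemma:defcoloring} (with parameters tuned as discussed below) to produce a partition $V = B_1 \sqcup \dots \sqcup B_T$ with $T = O(C/\delta)$ batches and total intra-batch edge weight at most $\delta W / C$. Second, I would process the batches sequentially over $T$ rounds: in round $t$, every $v \in B_t$ independently commits to the color $c \in [C]$ that minimizes the conditional expectation of $D$, where nodes in $B_{<t}$ contribute their fixed colors and every other undecided node (including the rest of $B_t$) is treated as uniform in $[C]$.

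For correctness I would track the potential $\Phi_t := \E[D \mid \text{commitments through round }t]$. By the standard averaging argument, each node's minimization only decreases the contribution from edges to $B_{<t}$, and edges from $B_t$ to $B_{>t}$ are unaffected (both endpoints still uniform in the post-round expectation). The only source of increase is intra-$B_t$ edges, whose expected contribution $w(e)/C$ under uniform can become $w(e)$ in the worst case once both endpoints commit. Hence $\Phi_{t+1} \leq \Phi_t + 2\cdot w(\text{intra-}B_t)$, and summing gives $\Phi_T \leq 2W/C + 2 \delta W/C = (1+\delta)\cdot 2W/C$, as required. The round complexity is $O(\log^* q)$ for preprocessing plus $T = O(C/\delta)$ for the main loop, and messages of $O(\log q)$ bits suffice throughout since nodes exchange at most one label per round.

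The hard part will be phase~1: producing a partition with both $T = O(C/\delta)$ batches and intra-partition weight $\leq \delta W/C$ in $O(\log^* q)$ rounds. A single application of \Cref{lemma:defcoloring} with defect parameter $\eps$ yields $O(1/\eps^2)$ batches with intra-partition weight $\leq \eps W$; setting $\eps = \delta/C$ satisfies the weight bound but yields $O(C^2/\delta^2)$ batches, a quadratic factor too many. Closing this gap likely requires either an iterated or recursive use of weighted defective coloring—exploiting that an average (rather than maximum) bound on intra-batch weight is sufficient for phase~2 to work—or a Linial-style color reduction specifically tuned to the average-defect criterion; this is the technical heart of the lemma.
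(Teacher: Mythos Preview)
Your two-phase structure is exactly the paper's approach, and your Phase~2 analysis is correct and matches the paper's basic greedy lemma (\Cref{lemma:simpleavgdefective}): iterate through the batches, each node picks the color minimizing weight to already-committed neighbors, yielding average defect $1/C$ on the inter-batch edges. The conditional-expectation framing you use is equivalent.

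One correction: you over-constrain yourself by asking for Phase~1 in $O(\log^* q)$ rounds. You have a total budget of $O(C/\delta + \log^* q)$, and Phase~2 only costs $T=O(C/\delta)$ rounds, so Phase~1 may also spend $O(C/\delta)$ rounds. This matters because the paper does \emph{not} produce the $O(C/\delta)$ batches in $O(\log^* q)$ time.

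For the acknowledged gap in Phase~1, the paper does precisely what you guess: a recursive defective-coloring refinement (\Cref{lemma:simplerecursivedefective}). First it spends $O(\log^* q)$ rounds once via \Cref{lemma:defcoloring} to drop to a properly $O((C/\delta)^2)$-colored subgraph carrying all but a $\delta/(2C)$-fraction of the weight. Then, to get $O(1/\eps)$ colors with average defect $\eps$ (for $\eps=\Theta(\delta/C)$), it recurses: compute a weighted $1/4$-defective $O(1)$-coloring, recursively solve the $2\eps$ instance inside each of the $O(1)$ color classes (total intra-class weight $\le W/4$, so defect $2\eps\cdot W/4=\eps W/2$), combine into an $O(1/\eps)$-coloring, and finally run the Phase~2 greedy once on the bichromatic subgraph to collapse to $\lceil 2/\eps\rceil$ colors. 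The recursion has depth $O(\log(1/\eps))$ and unrolls to $T(\eps)=O(1/\eps + \log(1/\eps)\cdot\log^* q)$ rounds, which after the initial $q\to O((C/\delta)^2)$ reduction becomes $O(C/\delta)$. Feeding the resulting $O(C/\delta)$ batches into your Phase~2 then finishes exactly as you wrote.
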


We note that the time bound in \Cref{lemma:recursiveavgdefective} has an additive $\log^* q$ term. We could just use $\log^*q = \log^* n$ by using the unique IDs as colors. However, since we have to apply the weighted average defective coloring lemma several times, this would give a slightly worse overall time complexity for our coloring algorithm. In the application, we will therefore always first compute an $O(\Delta^2)$-coloring in time $O(\log^* n)$ at the beginning of the algorithm and we can afterwards have $\log^*q = \log^*\Delta$. In all our applications of the lemma, $C/\delta$ will be $\omega(\log^*\Delta)$, so that the term disappears asymptotically.

\paragraph{Remark.} We note that for an unweighted graph $G=(V,E)$, the above lemma already gives a simple way to obtain an $O(\Delta\log n)$ coloring in $\poly(\log n)$ time. In particular, 
for $C=2$, the theorem implies that the nodes $V$ can be partitioned into two parts $V_a$ and $V_b$ such that the two induced subgraphs $G[V_a]$ and $G[V_b]$ together have at most $(1+\delta)\cdot|E|/2$ edges. When iterating the same procedure on the individual parts $k=\log\Delta$ times and by setting $\delta=1/\log\Delta$, we can divide the nodes $V$ into $2^k=\Delta$ parts $V_1,\dots,V_{\Delta}$ such that all the induced subgraphs $G[V_1],\dots,G[V_{\Delta}]$ together have at most 
\[
\frac{(1+\delta)^k}{2^k} \cdot |E| = \frac{(1-1/\log\Delta)^{\log\Delta}}{\Delta} \cdot |E| < \frac{e\cdot|E|}{\Delta} = O(n)
\]
edges. Therefore a constant fraction of the nodes have at most a constant number of neighbors in their own part and by computing a maximal independent set of the bounded-degree parts of each graph $G[V_i]$, in $O(\log^*q)$ additional time, we can therefore color a constant fraction of all nodes of $G$ with $\Delta$ colors. By repeating $O(\log n)$ times, this gives an extremely simple $O(\log\Delta /\delta \cdot \log n)=O(\log^2\Delta\cdot \log n)$-round algorithm to color $G$ with $O(\Delta\log n)$ colors. Even for such a coloring, the only previously known $\poly\log(n)$-time deterministic algorithm was based on first computing a network decomposition by using the recent algorithms of \cite{rozhonghaffari20} or \cite{GGR20}.

\section{Vertex Labelings with Edge Costs}
\label{sec:labelings}

As mentioned in the overview of our technique, a key part of our coloring algorithms is the rounding step. In this section, we present our rounding approach, formalized in the context of a vertex labeling problem that we define next.

\subsection{Problem Statement}

\begin{definition}[Edge Cost Vertex Labeling]\label{def:edgecost_labeling}
  Given a graph $G=(V,E)$, in an instance of the \emph{edge cost vertex labeling} problem, we are given a finite set of labels $\calL$ and an edge cost function $c:(V\times \calL)^2 \to \R_{\geq0}$. A solution is an assignment of a label $\ell_v\in \calL$ to each node $v\in V$, where the cost $C$ of the labeling is given as
  \[  
  C := \sum_{\set{u,v}\in E} c\big((u,\ell_u),(v,\ell_v)\big).
  \]  
\end{definition}

As the main result of this section, we give an efficient deterministic algorithm that achieves the following. Given an arbitrary random assignment of labels, we show how to deterministically compute a labeling with a total cost that is almost as good as the expected cost of the random labeling. It is convenient to think of a random label assignment as a fractional labeling of the nodes, as defined below.

\begin{definition}[Fractional Vertex Labeling]\label{def:fractionallabeling}
  Given a graph $G=(V,E)$ and a label set $\calL$, a \emph{fractional vertex labeling} of $G$ is an assignment of fractional values $x_{v,\ell}\in [0,1]$ for each label $\ell$ to each node $v\in V$ such that for all $v\in V$, $\sum_{\ell\in L}x_{v,\ell}=1$. A fractional vertex labeling is called $(1/Q)$-integral for some positive integer $Q$ if all fractional values $x_{v,\ell}$ are integer multiples of $1/Q$.
\end{definition}

The edge cost of a fractional vertex labeling is given by the expected cost if each node independently picks its label according to the probability distribution given by its fractional values. 

\begin{definition}[Edge Cost of Fractional Labeling]\label{def:fractionalcost}
  Given a graph $G=(V,E)$ and an instance of the vertex labeling with edge costs problem with label set $\calL$ and edge cost function $c:(V\times \calL)^2 \to \R_{\geq0}$, the cost $C$ of a fractional labeling $x_{v,\ell}$ for $v\in V$ and $\ell\in \calL$ is defined as
  \[
  C := \sum_{\set{u,v}\in E}\sum_{(\ell_u,\ell_v)\in\calL^2} 
       x_{u,\ell_u}\cdot x_{v,\ell_v}\cdot c\big((u,\ell_u), (v,\ell_v)\big). 
  \]  
\end{definition}

\subsection{The Algorithm for Rounding Vertex Labelings}
We next show how a given fractional vertex labeling can be efficiently turned into an integer vertex labeling, while only losing a small factor in the overall edge cost. We first show how the method of conditional expectations can be used together with a given vertex coloring of a graph to round a given fractional solution without loss. We then see how to obtain a fast rounding method with only a small loss on the overall edge cost.

\begin{lemma}[Basic Rounding Lemma]\label{lemma:basicrounding}
  Let $G=(V,E)$ be a graph and assume that for some integer $Q\geq 1$ and a label set $\calL$, we are given a $1/(2Q)$-integral fractional vertex labeling $x_{v,\ell}$ and an edge cost function $c:(V\times \calL)^2 \to \R_{\geq0}$. If we are also given a proper $\gamma$-vertex coloring of $G$, there is an $\gamma$-round distributed algorithm to compute an $1/Q$-integral fractional vertex labeling $x_{v,\ell}'$ such that for all $v$ and $\ell$, $x_{v,\ell}'\leq 2\cdot x_{v,\ell}$ and such that the total cost of the fractional vertex labeling $x_{v,\ell}'$ is not larger than the cost of the fractional vertex labeling $x_{v,\ell}$. The distributed algorithm requires messages of size $O\big(\min\set{Q\cdot \log|\calL|, |\calL|\cdot\log Q}\big)$ bits.
\end{lemma}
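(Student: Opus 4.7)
The plan is to process the given color classes one at a time and to round each node's fractional assignment locally using a greedy, method-of-conditional-expectations style choice that does not increase a suitably defined hybrid cost.

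I would first set up the combinatorics of rounding at a single vertex. Write $x_{v,\ell}=a_{v,\ell}/(2Q)$; the labels $\ell$ with $a_{v,\ell}$ even are already $1/Q$-integral and will be left untouched, while each label with $a_{v,\ell}$ odd must move by exactly $1/(2Q)$, either up or down. Since $\sum_\ell a_{v,\ell}=2Q$ is even and the even-coefficient terms contribute an even sum, the number $k_v$ of odd-coefficient labels at $v$ is even, and preserving $\sum_\ell x'_{v,\ell}=1$ forces exactly $k_v/2$ of them to round up and the remaining $k_v/2$ to round down. The bound $x'_{v,\ell}\le 2x_{v,\ell}$ is then immediate: if $a_{v,\ell}=0$ then $x'_{v,\ell}=0$, and otherwise $x_{v,\ell}\ge 1/(2Q)$, so $x'_{v,\ell}\le x_{v,\ell}+1/(2Q)\le 2x_{v,\ell}$.

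Next, I would process the $\gamma$ color classes sequentially, with all nodes of a given class acting in parallel. Let $y^{(i)}$ denote the hybrid fractional labeling which equals the rounded $x'$ on classes $1,\dots,i$ and the original $x$ elsewhere, so $y^{(0)}$ is the input labeling and $y^{(\gamma)}$ is the output. The key invariant is that the cost of $y^{(i)}$ is non-increasing in $i$. When a node $v$ of class $i$ chooses its rounding, the only edge terms that change are those on edges $\{v,u\}$, and properness of the coloring guarantees $u\notin$ class $i$, so the contribution of these edges is a linear function $\sum_\ell x'_{v,\ell}\,B_\ell^{(v)}$ of $v$'s own fractional assignment, with coefficients
\[
B_\ell^{(v)} \;:=\; \sum_{u\in N(v)}\,\sum_{\ell_u\in\calL} y^{(i-1)}_{u,\ell_u}\cdot c\big((v,\ell),(u,\ell_u)\big)
\]
depending only on the frozen labelings of $v$'s neighbors. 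Node $v$ then sorts its odd-coefficient labels by $B_\ell^{(v)}$ and rounds the $k_v/2$ smallest up and the $k_v/2$ largest down; the resulting change in cost equals $(1/(2Q))\bigl(\sum_{\text{up}}B_\ell^{(v)}-\sum_{\text{down}}B_\ell^{(v)}\bigr)\le 0$. Because the coloring is proper, no two class-$i$ nodes share an edge, so the cost changes produced by distinct class-$i$ nodes are supported on disjoint edges and add up; parallel application within the class therefore preserves monotonicity, and chaining the inequality yields $\text{cost}(y^{(\gamma)})\le \text{cost}(y^{(0)})$.

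For the implementation, in the $i$-th round each class-$i$ node broadcasts its final $x'_v$ to its neighbors; by then it knows $x'_u$ for already-processed neighbors from the previous rounds and the initial $x_u$ of not-yet-processed neighbors, which together supply $y^{(i-1)}$ restricted to $N(v)$, the only information needed to evaluate the $B_\ell^{(v)}$. A single vertex's labeling has at most $2Q$ nonzero coordinates (each is at least $1/(2Q)$), so it can be transmitted either as a list of $O(Q)$ (label, value) pairs in $O(Q\log(|\calL|\cdot Q))=O(Q\log|\calL|)$ bits when $Q\le|\calL|$, or as the full vector of $|\calL|$ entries in $O(|\calL|\log Q)$ bits otherwise, matching the claimed message-size bound. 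The main obstacle, conceptually, is identifying the correct invariant — the hybrid cost $\text{cost}(y^{(i)})$ — and checking that parallel application within a color class is safe; both ultimately rely on the bilinearity of the edge cost in the endpoints' labels together with the fact that class-$i$ updates perturb disjoint edge sets.
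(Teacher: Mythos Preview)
Your proof is correct and follows essentially the same approach as the paper's: process the $\gamma$ color classes one at a time, and within each class have every node greedily split its odd-numerator labels into up/down halves according to the linear coefficients $B_\ell^{(v)}$ (the paper's $W_{v,\ell}$), which keeps the hybrid cost non-increasing because color classes are independent sets. One small implementation wrinkle: as written, a class-$1$ node needs the \emph{initial} $x_u$ of all its neighbors before anyone has broadcast, so you either need a preliminary round in which every node sends its initial assignment, or (as the paper does) have every node broadcast its current assignment in every round; either way the round count stays $O(\gamma)$ and the argument is unaffected.
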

\begin{proof}
   For a node $v\in V$, we define $\calL_v\subseteq \calL$ to be the set of labels $\ell$ for which $x_{v,\ell}$ is an integer multiple of $1/Q$ and we define $\overline{\calL}_v:=\calL\setminus \calL_v$ to be the remaining set of labels. To get a $1/Q$-integral solution, for each node $v\in V$, we do the following. If $\ell\in \calL_v$, we can just define $x_{v,\ell}':=x_{v,\ell}$ because $x_{v,\ell}$ is already $1/Q$-integral. For the labels in $\overline{\calL}_v$, first observe that for each label $\ell\in\overline{\calL}_v$, we have $x_{v,\ell}=i/(2Q)$ for some odd integer $i\geq 1$ (otherwise, $x_{v,\ell}$ would be $1/Q$-integral). Hence, we know that the set $\overline{\calL}_v$ is of even size. If $\ell\in \overline{\calL}_v$, we set $x_{v,\ell}'$ to either $x_{v,\ell}-1/(2Q)$ or to $x_{v,\ell}+1/(2Q)$. In order to make sure that $\sum_{\ell\in \calL}x_{v,\ell}'=1$ as required by \Cref{def:fractionallabeling}, for exactly half the labels $\ell\in \overline{\calL}_v$, we need to set $x_{v,\ell}'=x_{v,\ell}-1/(2Q)$ and for exactly half the labels $\ell\in \overline{\calL}_v$, we need to set $x_{v,\ell}'=x_{v,\ell}+1/(2Q)$. In this way, we clearly guarantee that $x_{v,\ell}'\leq 2\cdot x_{v,\ell}$. In the following, we describe how we choose which labels in $\overline{\calL}_v$ to round down and which labels in $\overline{\calL}_v$ to round up. 
   
   Recall that we are given a proper coloring of the nodes of $G$ with colors from $[\gamma]$. For every $i\in[\gamma]$, let $V_i$ be the set of nodes that are colored with color $i$. The algorithm consists of $\gamma$ phases $1,\dots,\gamma$, where in phase $i$, all nodes in $V_i$ choose their new fractional label assignment. In each phase $i$, we show that we can round the fractional assignments of all $v\in V_i$ such that the total edge cost of the fractional labeling does not increase. For convenience, for all nodes $v\in V$, we define a variable $y_{v,\ell}$, which we initialize to $y_{v,\ell}=x_{v,\ell}$ at the beginning of the algorithm. When processing node $v$, we decide about the new fractional value $x_{v,\ell}'$ and we set $y_{v,\ell}=x_{v,\ell}'$. Like this, the values $y_{v,\ell}$ always define the current partially rounded fractional assignment.
   
   Note that because we are given a proper $\gamma$-coloring, for every $i\in[\gamma]$, the nodes in $V_i$ form an independent set of $G$. Let us now focus on one phase $i$ and consider some node $v\in V_i$. Because $V_i$ is an independent set, for all the edges that are incident to node $v$, the change of the cost of those edges in phase $i$ only depends on how node $v$ chooses its new rounded fractional label assignment. We therefore have to show that node $v$ can pick the values $x_{v,\ell}'$ for all $\ell\in \calL$ such that the total cost of its edges does not increase. For every node $v\in V_i$, we now define some weight $W_{v,\ell}$ for every $\ell\in \overline{\calL_v}$ as follows:
   \[
     W_{v,\ell} := \sum_{u\in N(v)} \sum_{\ell'\in\calL} y_{u,\ell'}\cdot c((v,\ell), (u,\ell')).
   \]
   Note that $W_{v,\ell}$ is the cost of $v$'s edges if $v$ would fix its label to $\ell$ (and the fractional assignment of $v$'s neighbors remains fixed). If the fractional value $x_{v,\ell}$ is increased or decreased by $1/(2Q)$, the total cost of $v$'s edges therefore increases or decreases by $W_{v,\ell}/(2Q)$. In order to not increase the total weight, $v$ therefore wants to decrease the fractional value for labels $\ell$ with small $W_{v,\ell}$ and it wants to increase the fractional value for labels $\ell$ with large $W_{v,\ell}$.
   Let $L:=|\overline{\calL}_v|$ be the number of labels of node $v$ for which $x_{v,\ell}$ has to be rounded up or down. As mentioned, we know that $L$ is an even number. We define a set $\overline{\calL}_{v,-}\subseteq \overline{\calL}_v$ of size $\overline{\calL}_{v,-}=L/2$ by adding the $L/2$ labels $\ell\in \overline{\calL}_v$ with the largest weights $W_{v,\ell}$ to the set $\overline{\cal_L}_{v,-}$ (ties broken arbitrarily). This in particular implies that
   \[
    \sum_{\ell\in \overline{\calL}_{v,-}} W_{v,\ell} \geq \sum_{\ell\in \overline{\calL}_{v,+}} W_{v,\ell}
   \]
   Node $v\in V_i$ computes its new fractional assignment $x_{v,\ell}'$ as follows. For all $\ell\in \overline{\calL}_{v,-}$, we set $x_{v,\ell}':=x_{v,\ell}-1/(2Q)$ and for all $\ell\in \overline{\calL}_{v,+}$, we set $x_{v,\ell}':=x_{v,\ell}+1/(2Q)$.
   Let $C_v$ be the total cost of all edges of $v$ at the beginning of phase $i$ and let $C_v'$ be the total cost of all edges of $v$ at the end of phase $i$. We have
   \begin{eqnarray*}
     C_v' - C_v& = & \sum_{w\in N(v)}\sum_{(\ell_v,\ell_w)\in \calL^2} 
     (x_{v,\ell_v}' - x_{v,\ell_v})\cdot y_{w,\ell_w} \cdot c\big((v,\ell_v),(w,\ell_w)\big)\\
     & = & \sum_{\ell_v\in \overline{\calL}_v}(x_{v,\ell_v}'-x_{v,\ell_v})\cdot
     \underbrace{\sum_{w\in N(v)}\sum_{\ell_w\in\calL}y_{w,\ell_w}\cdot c\big((v,\ell_v),(w,\ell_w)\big)}_{=W_{v,\ell_v}}\\
     & = & \frac{1}{2Q}\cdot\left[
     \sum_{\ell_v\in \overline{\calL}_{v,+}} W_{v,\ell_v}
     - \sum_{\ell_v\in \overline{\calL}_{v,-}} W_{v,\ell_v}
     \right]\ \leq\ 0.
   \end{eqnarray*}
    The rounding therefore does not increase the total edge cost. Clearly, every phase can be implemented in a single round, each node just needs to learn the current fractional assignment of all neighbors. A fractional assignment with $1/(2Q)$-integral fractional values can be encoded with $O\big(\min\set{Q\cdot \log|\calL|, |\calL|\cdot\log Q}\big)$ bits. This prove the bound on the message size, which concludes the proof.
\end{proof}

The above lemma has a round complexity that is linear in the number of colors. Of course, if aiming for a proper coloring, this would require up to $\Delta+1$ colors. In the next lemma, we see how to get a faster algorithm, by relaxing the rounding objective to an approximate rounding, and using (weighted average) defective coloring.

\begin{lemma}[Approximate Rounding Lemma]\label{lemma:approxrounding}
  Let $\eps>0$ be a parameter and let $G=(V,E)$ be a graph and assume that for some integer $Q\geq 1$ and a label set $\calL$, we are given a $1/(2Q)$-integral fractional vertex labeling $x_{v,\ell}$ and an edge cost function $c:(V\times \calL)^2 \to \R_{\geq0}$. If we are also given a proper $\gamma$-vertex coloring of $G$, there is an $O(1/\eps + \log^*\gamma)$-round distributed algorithm to compute an $1/Q$-integral fractional vertex labeling $x_{v,\ell}'$ such that the total cost of the fractional vertex labeling $x_{v,\ell}'$ is by at most a $1+\eps$ factor larger than the cost of the fractional vertex labeling $x_{v,\ell}$. The distributed algorithm requires messages of size $O\big(\min\set{Q\cdot \log|\calL|, |\calL|\cdot\log Q}+ \log\gamma \big)$ bits.
\end{lemma}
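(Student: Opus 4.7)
The plan is to combine the lossless basic rounding of \Cref{lemma:basicrounding} with a weighted average defective $O(1/\eps)$-coloring from \Cref{lemma:recursiveavgdefective}: the defective coloring plays the role of the proper $\gamma$-coloring needed by the basic rounding, and the edges on which the defective coloring fails are absorbed into the $1+\eps$ slack using the crude guarantee $x'_{v,\ell}\leq 2x_{v,\ell}$. First, for each edge $e=\{u,v\}\in E$, I would define a weight
\[
 w(e) := \sum_{(\ell_u,\ell_v)\in\calL^2} x_{u,\ell_u}\cdot x_{v,\ell_v}\cdot c\bigl((u,\ell_u),(v,\ell_v)\bigr),
\]
so that the total cost of the input fractional labeling equals $\sum_{e\in E}w(e)$. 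Invoking \Cref{lemma:recursiveavgdefective} on the weighted graph $(V,E,w)$ with the given proper $\gamma$-coloring, with $C=\lceil 6/\eps\rceil$ colors and $\delta=1$ (so $(1+\delta)/C\leq \eps/3$), produces a coloring $\varphi:V\to[C]$ whose monochromatic edges $E_{\mathrm{mono}}$ carry at most an $\eps/3$ fraction of the total weight, in $O(C/\delta+\log^*\gamma)=O(1/\eps+\log^*\gamma)$ rounds with $O(\log\gamma)$-bit messages.

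Next, I would run the rounding procedure of \Cref{lemma:basicrounding} on the subgraph $G'=(V,E\setminus E_{\mathrm{mono}})$, on which $\varphi$ is a proper $C$-coloring. Each node $v$ processes its $\varphi$-class as in the basic rounding, except that the per-label weights are restricted to $G'$-neighbors:
\[
W_{v,\ell} := \sum_{u\in N(v),\,\varphi(u)\neq\varphi(v)}\,\sum_{\ell'\in\calL} y_{u,\ell'}\cdot c\bigl((v,\ell),(u,\ell')\bigr).
\]
Since a $\varphi$-class is independent in $G'$, the conditional-expectation argument of \Cref{lemma:basicrounding} applies verbatim to the cost restricted to $G'$: after $O(C)=O(1/\eps)$ rounds we obtain a $1/Q$-integral labeling $x'_{v,\ell}$ satisfying $x'_{v,\ell}\leq 2x_{v,\ell}$ and such that the sum of costs over edges of $G'$ does not increase. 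The message size of this phase is $O(\min\{Q\log|\calL|,|\calL|\log Q\})$ bits, matching the bound in the statement once combined with the $O(\log\gamma)$ bits used by the defective coloring.

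For the cost analysis, on any edge $e=\{u,v\}\in E\setminus E_{\mathrm{mono}}$ the total contribution does not grow by the preceding step, while on any $e\in E_{\mathrm{mono}}$ the bound $x'_{v,\ell}\leq 2 x_{v,\ell}$ together with $c\geq 0$ gives a pointwise estimate $x'_{u,\ell_u}x'_{v,\ell_v}\leq 4 x_{u,\ell_u}x_{v,\ell_v}$, so the contribution of $e$ grows by at most a factor of $4$. Therefore the new total cost is at most
\[
\sum_{e\in E} w(e) + 3\sum_{e\in E_{\mathrm{mono}}} w(e) \;\leq\; \Bigl(1+3\cdot\tfrac{\eps}{3}\Bigr)\sum_{e\in E} w(e) \;=\; (1+\eps)\sum_{e\in E} w(e),
\]
as required. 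Adding the two round complexities gives $O(1/\eps+\log^*\gamma)$.

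The main obstacle I anticipate is a bookkeeping one: one must check that \Cref{lemma:basicrounding} still works when the coloring used is only proper for a subgraph. The key observation that makes this go through is that the independence of a $\varphi$-class in $G'$ means rounding choices of nodes in the same class only interact through edges in $E_{\mathrm{mono}}$, which have already been accounted for via the factor-$4$ blow-up; restricting $W_{v,\ell}$ to $G'$-neighbors is exactly what is needed so that each node's local conditional-expectation step is monotone with respect to the cost restricted to $G'$. Choosing $C$ and $\delta$ so that the weighted defect is $\eps/3$ then balances the two sources of loss and yields the $(1+\eps)$ approximation.
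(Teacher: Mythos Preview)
Your proposal is correct and follows essentially the same approach as the paper: define edge weights as the current fractional edge costs, compute a weighted average $(\eps/3)$-relative defective $O(1/\eps)$-coloring via \Cref{lemma:recursiveavgdefective}, apply \Cref{lemma:basicrounding} on the bichromatic subgraph, and bound the monochromatic edges by the crude factor-$4$ blow-up coming from $x'_{v,\ell}\leq 2x_{v,\ell}$. The only minor omission is that before invoking \Cref{lemma:recursiveavgdefective}, each node must first learn its neighbors' fractional labelings in order to know the weights $w(e)$ of its incident edges; this costs one round with messages of $O(\min\{Q\log|\calL|,|\calL|\log Q\})$ bits, which is already within your stated message bound.
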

\begin{proof}
   We define edge weights $w(e)\geq 0$, where the weight of an edge $\set{u,v}$ is the cost of the edge according to the initial fractional vertex labeling $x_{v,\ell}$:
   \[
   w\big(\set{u,v}\big) := \sum_{(\ell_u,\ell_v)\in\calL^2} x_{u,\ell_u}\cdot x_{v,\ell_v}\cdot c\big((u,\ell_u),(v,\ell_v)\big).
   \]
   For a node $v\in V$ to know the weights of its edges, it is sufficient if $v$ learns the initial fractional labelings of all its neighbors $u\in N(v)$.
   Each node $u$ can transmit this information to each of its neighbors using at most $O\big(\min\set{Q\cdot \log|\calL|, |\calL|\cdot\log Q}\big)$ bits. In the following, let $W:=\sum_{e\in E} w(e)$ be the total weight of all the edges and thus the total edge cost of the fractional labeling. 
   
   Next, we use \Cref{lemma:recursiveavgdefective} to compute a weighted $(\eps/3)$-relative average defective $O(1/\eps)$-coloring of graph $G$ with edge weights $w(e)$. By \Cref{lemma:recursiveavgdefective}, this requires $O(1/\eps + \log^* \gamma)$ rounds and messages of size $O(\log \gamma)$. Let $E_0$ be the set of monochromatic edges of this $O(1/\eps)$-coloring and let $W_0:=\sum_{e\in E_0}w(e)$ be the total weight (and thus the total cost) of the edges in $E_0$. Note that by the definition of a weighted $(\eps/3)$-relative average defective coloring, we have $W_0\leq \eps/3\cdot W$. Similarly, define $E_1\setminus E_0$ to be the bichromatic edges and let $W_1=W-W_0$ the total weight (and thus total cost) of the bichromatic edges.
   
   We now define the graph $G_1:=(V,E_1)$ as the subgraph, where we remove all monochromatic edges. Note that the defective $O(1/\eps)$-coloring that we computed is a proper coloring of $G_1$. By \Cref{lemma:basicrounding}, we can therefore compute a $1/Q$-integral fractional labeling $x_{v,\ell}'$ such that for all $v$ and $\ell$, $x_{v,\ell}'\leq 2x_{v,\ell}$ and such that the total edge cost of the new fractional labeling on graph $G_1$ (i.e., on the edges $E_1$) is at most $W_1$. By \Cref{lemma:basicrounding}, this can be done in $O(1/\eps)$ rounds and with messages of size at most $O\big(\min\set{Q\cdot \log|\calL|, |\calL|\cdot\log Q}\big)$ bits.
   
   The total edge cost of the new fractional labeling is the total cost of the edges in $E_1$, which is upper bounded by $W_1$, and the total cost of the edges in $E_0$. For an edge $\set{u,v}\in E$, let $w'(\set{u,v})$ be the cost of the edge according to the new fractional labeling, i.e.,
   \[
   w'\big(\set{u,v}\big) := \sum_{(\ell_u,\ell_v)\in\calL^2} x_{u,\ell_u}'\cdot x_{v,\ell_v}'\cdot c\big((u,\ell_u),(v,\ell_v)\big).
   \]
   Because we have $x_{v,\ell}'\leq 2\cdot x_{v,\ell}$ for all $v$ and $\ell$, we have $w'(e)\leq 4w(e)$ for all $e\in E$. The new total cost $W_0'$ of the edges in $E_0$ after the rounding is therefore at most $4 W_0$. Let $W'$ be the total cost of all edges after rounding. We have $W'\leq W_1 + 4 W_0 \leq W - W_0 + 4W_0 = W + 3W_0$. We already know that $W_0\leq \eps/3\cdot W$ and we thus have $W'\leq (1+\eps)W$, which concludes the proof.
\end{proof}

\begin{corollary}\label[corollary]{crl:labeling}
    Let $\eps>0$ be a parameter and let $G=(V,E)$ be a graph and assume that for some integer $k\geq 1$, we are given a $1/2^k$-integral fractional vertex labeling $x_{v,\ell}$ and an edge cost function $c:(V\times \calL)^2 \to \R_{\geq0}$. If a proper $\gamma$-coloring of $G$ is given, there is an $O(k^2/\eps+\log^* \gamma)$-round algorithm to compute an integral vertex labeling with a total edge cost at most $(1+\eps)$ times the total edge cost of the given fractional labeling. The algorithm uses messages of $O\big(\min\set{2^k\cdot \log|\calL|, |\calL|\cdot k} + \log \gamma\big)$ bits.
\end{corollary}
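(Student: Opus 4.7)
The plan is to apply \Cref{lemma:approxrounding} iteratively $k$ times. Observe that a $1/2^k$-integral labeling is in particular $1/(2\cdot 2^{k-1})$-integral, so one application of the approximate rounding lemma yields a $1/2^{k-1}$-integral labeling. Iterating, after $i$ applications we have a $1/2^{k-i}$-integral labeling, so after $k$ applications we end up with a $1/2^0=1$-integral labeling, i.e., an integer labeling as required.

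To control the compounded cost blowup, I would set the rounding parameter in each invocation of \Cref{lemma:approxrounding} to $\eps' := \eps/(2k)$. Since each call multiplies the total edge cost by at most $1+\eps'$, after $k$ calls the cost grows by a factor of at most
\[
(1+\eps')^k \,=\, \bigl(1+\tfrac{\eps}{2k}\bigr)^k \,\leq\, e^{\eps/2} \,\leq\, 1+\eps,
\]
where the last inequality uses $e^{x/2}\leq 1+x$ for $0\leq x\leq 1$; the regime $\eps>1$ can be handled by simply replacing $\eps$ by $\min\set{\eps,1}$ since the statement becomes weaker for larger $\eps$.

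For the round complexity, each call to \Cref{lemma:approxrounding} uses $O(1/\eps'+\log^*\gamma) = O(k/\eps + \log^*\gamma)$ rounds, so $k$ calls cost $O(k^2/\eps + k\log^*\gamma)$. To match the stated $O(k^2/\eps+\log^*\gamma)$ bound I would first perform a single $O(\log^*\gamma)$-round reduction from the given $\gamma$-coloring to an $O(\Delta^2)$-coloring (as in the remark after \Cref{lemma:recursiveavgdefective}), so that all subsequent defective-coloring computations inside \Cref{lemma:approxrounding} only incur an $O(\log^*\Delta)$ additive term, which is dominated by the $k/\eps$ per-call term in the regime of interest. The message-size bound follows from the bound in \Cref{lemma:approxrounding}: the worst (first) iteration has $Q=2^{k-1}$, so each round uses $O\bigl(\min\set{2^k\log|\calL|,|\calL|\cdot k}+\log\gamma\bigr)$ bits, and later iterations are no worse.

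The only delicate step is the cost accounting across the $k$ rounding phases — in particular, making sure that the per-call multiplicative slack $1+\eps/(2k)$ accumulates to at most $1+\eps$ while still leaving each call with round complexity $O(k/\eps+\log^*\gamma)$. Everything else is a direct unrolling of \Cref{lemma:approxrounding}, so once the parameter choice is fixed the corollary falls out immediately.
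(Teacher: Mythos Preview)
Your core plan---iterate \Cref{lemma:approxrounding} $k$ times with per-step slack $\eps'=\Theta(\eps/k)$---is exactly what the paper does, and your cost accounting is fine. The gap is in how you dispose of the accumulated $\log^*$ terms. Reducing via Linial to an $O(\Delta^2)$-coloring leaves you with $O(k\log^*\Delta)$ over the $k$ calls, and nothing in the corollary guarantees $\log^*\Delta=O(k/\eps)$: the statement does not mention $\Delta$ at all, and $\gamma$ can be much smaller than $\Delta$ (so $\log^*\gamma$ does not absorb it either). Your hedge ``in the regime of interest'' concedes this; it is true for the later applications, but it does not prove the corollary as stated.

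The paper fixes this differently. Before any rounding, it invokes \Cref{lemma:defcoloring} once with a very small relative defect $\delta=\eps/(2\cdot 4^k)$ with respect to the \emph{initial} edge costs, obtaining an $O(1/\delta^2)=O(16^k/\eps^2)$-coloring in $O(\log^*\gamma)$ rounds. The monochromatic edges of this coloring are then dropped for the remainder of the algorithm; since each fractional value can grow by at most a factor $2^k$ over all $k$ rounding steps, each edge cost can grow by at most $4^k$, so the dropped edges contribute at most $4^k\cdot\delta=\eps/2$ times the initial cost even after arbitrary rounding. On the surviving graph the defective coloring is proper with $O(16^k/\eps^2)$ colors, so each of the $k$ calls to \Cref{lemma:approxrounding} now incurs only $\log^*\big(O(16^k/\eps^2)\big)=O(k/\eps)$ for its additive term, giving the claimed $O(k^2/\eps+\log^*\gamma)$ total. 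Replacing your Linial preprocessing by this one-shot defective-coloring step closes the gap.
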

\begin{proof}
    For convenience, we define $Q:=2^k$. We define $\delta:= \eps /(2Q^2)$ and we first compute a weighted $\delta$-defective $O(1/\delta^2)$-coloring in time $O(\log^* \gamma)$ by using \Cref{lemma:defcoloring}, where the edge weights are defined by the edge costs of the initial fractional labeling. Note that even if all nodes first need to learn the edge costs of their edges, the algorithm can be carried out with messages of size $O\big(\min\set{Q\cdot \log|\calL|, |\calL|\cdot\log Q} + \log \gamma\big)$ bits. Note also that even if fractional values are rounded to integer values in an arbitrary way, because we start with an $1/Q$-integral solution, the cost of an edge can grow by at most a factor $Q^2$. Even after rounding, the total cost of all the monochromatic edges of the computed defective coloring is $Q^2\cdot \delta = \eps/2$ times the total initial cost of all the edges. We can therefore remove the monochromatic edges of the defective coloring and we now have to round fractional values in a properly $O(Q^4/\eps^2)=O(16^k/\eps^2)$-colored graph. In this graph, we can still afford to lose a factor $1+\eps/2$. We can now do this rounding by applying \Cref{lemma:approxrounding} $k$ times to get from a $1/2^k$-integral fractional assignment to an integral fractional assignment. If we use parameter $\eps' = c\cdot \eps / k$ for a sufficiently small constant $c$ when applying \Cref{lemma:approxrounding}, the total cost grows by a factor at most $(1+\eps')^k \leq 1+\eps/2$ over all $k$ rounding steps. By \Cref{lemma:approxrounding}, the running time of each step is $O(1/\eps' + \log^* (16^k/\eps^2))=O(k/\eps)$ and the required message size is $O\big(\min\set{2^k\cdot \log|\calL|, |\calL|\cdot k}+ \log \gamma\big)$. Because we have $k$ rounding steps, the overall time complexity is $O(k^2/\eps)$.
\end{proof}

\section{Coloring Algorithms}
\label{sec:algorithms}

In this section, we explain how we use the vertex-labeling procedure explained in the previous section to obtain efficient vertex color algorithms.

\subsection{\textsf{LOCAL} Model Algorithm}
\label{sec:LOCALalgorithm}

Consider a $(\mathit{degree}+1)$-list coloring problem: each node $v$ is given a list of colors $L_v \subseteq \{1, \dots, \mathcal{C}\}$, with the guarantee that $|L_v|\geq d(v)+1$ where $d(v)$ denotes the degree of node $v$. The objective is to assign each node $v$ a color $\ell\in L_v$ such that any two neighboring nodes have different colors. 

\begin{theorem}\label{thm:LOCAL-coloring} There is a deterministic distributed algorithm that solves any $(\mathit{degree}+1)$-list coloring problem in $O(\log^2 \Delta\cdot \log n)$ rounds, using messages of size $O(\Delta \cdot\log(\mathcal{C}) )$. Here, $\mathcal{C}$ denotes the size of the entire space of colors, i.e., each node $v$ has a color list $L_v \subseteq \{1, \dots, \mathcal{C}\}$.
\end{theorem}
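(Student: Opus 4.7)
The plan is to solve the list-coloring instance iteratively: in each phase, finalize the color of a constant fraction of the still-uncolored vertices in $O(\log^2\Delta)$ rounds, so that $O(\log n)$ phases suffice. Within a phase I will cast the problem as an edge-cost labeling instance of \Cref{def:edgecost_labeling}, round it via \Cref{crl:labeling}, and then extract a large conflict-free subset of vertices by an MIS in a constant-degree graph. Whenever a vertex $u$ commits to a color $c$, I delete $u$ and remove $c$ from the lists of its remaining neighbors; for every such neighbor $v$ both $d(v)$ and $|L_v|$ drop by at most $1$, so the invariant $|L_v|\geq d(v){+}1$ is preserved across phases. As preprocessing I compute an $O(\Delta^2)$-coloring of $G$ via Linial's algorithm in $O(\log^* n)$ rounds; this coloring is reused in every phase, so every call to \Cref{crl:labeling} sees $\gamma=O(\Delta^2)$ and contributes only an additive $\log^*\Delta$ to the per-phase round count.

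In a single phase, set $k=\lceil\log_2(4(\Delta{+}1))\rceil$ and $Q=2^k=\Theta(\Delta)$, and spread mass as uniformly as possible over $L_v$: with $q_v=\lfloor Q/|L_v|\rfloor$ and $r_v=Q-q_v|L_v|\in[0,|L_v|)$, assign $x_{v,\ell}=(q_v{+}1)/Q$ to $r_v$ labels of $L_v$, $x_{v,\ell}=q_v/Q$ to the remaining $|L_v|-r_v$ labels of $L_v$, and $x_{v,\ell}=0$ otherwise. The resulting labeling is $1/Q$-integral, satisfies $\sum_\ell x_{v,\ell}=1$, and gives $x_{v,\ell}\leq (Q/|L_v|+1)/Q\leq 2/|L_v|$. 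Take the cost function $c((u,\ell_u),(v,\ell_v))=\mathbf{1}\{\ell_u=\ell_v\}$, so that the cost of \Cref{def:fractionalcost} is exactly the expected number of monochromatic edges under independent rounding. Each edge $\{u,v\}$ then costs at most $\sum_{\ell\in L_u\cap L_v} 4/(|L_u|\cdot|L_v|)\leq 4/\max(|L_u|,|L_v|)\leq 4/\max(d(u){+}1,d(v){+}1)$; charging each edge to an endpoint with the larger list bounds the total initial cost by $4\sum_v d(v)/(d(v){+}1)\leq 4n$. Applying \Cref{crl:labeling} with this labeling, constant $\eps=1$, $k=O(\log\Delta)$, and $\gamma=O(\Delta^2)$ then outputs an integral labeling $\ell_v\in L_v$ with at most $8n$ monochromatic edges in $O(\log^2\Delta+\log^*\Delta)$ rounds, using messages of size $O(\min\{2^k\log\mathcal{C},\mathcal{C}\cdot k\}+\log\Delta)=O(\Delta\log\mathcal{C})$.

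Let $H$ be the graph on $V$ whose edges are the monochromatic edges of the produced labeling; since $|E(H)|\leq 8n$, by Markov at least $n/2$ vertices have $H$-degree at most $32$. Restricting $H$ to those low-degree vertices gives a graph of maximum degree at most $32$, in which a maximal independent set $S$ of size $\Omega(n)$ can be computed deterministically in $O(\log^* n)$ rounds. All vertices of $S$ safely commit their rounded colors and are removed together with those colors from their neighbors' lists, finalizing a constant fraction of still-uncolored vertices. Iterating $O(\log n)$ times handles the whole graph; the dominant per-phase cost is $O(\log^2\Delta)$, additive $\log^* n$ terms from the MIS and the one-shot preprocessing are absorbed in the standard way, and message sizes stay $O(\Delta\log\mathcal{C})$ throughout. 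The main conceptual obstacle is verifying that the near-uniform fractional labeling on $L_v$ has expected cost only $O(n)$---this is what makes a single phase color a constant fraction of vertices and is essentially the only place where the $(\mathit{degree}{+}1)$-list coloring hypothesis enters. Once this is in hand, \Cref{crl:labeling} does the heavy lifting and the remaining post-processing is standard.
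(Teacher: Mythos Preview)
Your proposal is correct and follows essentially the same route as the paper: precompute an $O(\Delta^2)$-coloring via Linial, then in each of $O(\log n)$ phases set up a near-uniform $1/Q$-integral fractional labeling on each node's list with the indicator cost function, bound the total cost by $O(n)$, round via \Cref{crl:labeling} with $\eps=\Theta(1)$ and $k=O(\log\Delta)$, and finalize a constant fraction of nodes by an MIS on the bounded-degree monochromatic subgraph (using the precomputed $O(\Delta^2)$-coloring so the MIS costs $O(\log^*\Delta)$ per phase, not $O(\log^* n)$). One small slip to fix: the inequality $x_{v,\ell}\le (Q/|L_v|+1)/Q\le 2/|L_v|$ requires $|L_v|\le Q$, which is not guaranteed since the hypothesis only gives $|L_v|\ge d(v)+1$; simply truncate each list to size $\min(|L_v|,d(v)+1)\le\Delta+1\le Q$ before spreading the mass (this is exactly what the paper does when it passes to $\hat L_v$ of size $\hat d(v)$), and your cost bound then goes through as written.
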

\begin{proof}                                                                                                                                                                                      
We describe an algorithm that colors a constant fraction of nodes, in $O(\log^2 \Delta)$ rounds. The theorem then follows by repeating this procedure $O(\log n)$ times, each time removing the colors of the colored nodes from the lists of their neighbors. 

As a helper tool, at the very beginning, we compute a $O(\Delta^2)$ proper coloring $\phi$ of all vertices. This can be done in $O(\log^* n)$ time using Linial's classic algorithm\cite{linial1987LOCAL}. \footnote{\label{note1} As we will later see, this $O(\Delta^2)$ proper coloring means that computing a new coloring of a subgraph of these vertices can be done with a complexity that is $O(\log^* \Delta)$, which will be subsumed by higher complexities of the algorithm such as $O(\log^2 \Delta)$. This way, we avoid paying a $O(\log n\log^* n)$ term in the complexity.}

We now describe how we color a constant fraction of nodes. We start with a simple fractional color assignment and then use the rounding procedure described in the previous section to turn this into an integral color assignment with essentially the same cost. We then see how, thanks to the fact that the overall cost is small, we can properly color a constant fraction of the nodes. 

\smallskip
\paragraph{Fractional Color Assignment.} Consider a node $v$ with color list $L_v$ and degree $d(v)$. Let $\hat{d}(v)$ be the largest power of $2$ that is less than or equal to $d(v)+1$, i.e., $\hat{d}(v) = 2^{\lfloor \log_2 d(v)+1 \rfloor}$. Notice that $\hat{d}(v) \leq d(v)$ and ${d(v)}/2\leq {\hat{d}(v)}$. 
Let $\hat{L}_v$ be an arbitrary subset of $L_v$ with size $\hat{d}(v)$. Let $\calL = \{1, \dots, \mathcal{C}\}$. Consider the fractional assignment $x_v$ where for each color $\ell \in \hat{L}_v$, we set $x_{v, \ell} = \frac{1}{\hat{d}(v)}$ and for every color $\ell \notin \hat{L}_v$, we set  $x_{v, \ell} = 0$. Notice that this is a $1/Q$-integral assignment for a value of $Q=2^{\lfloor \log_2 (\Delta +1) \rfloor}$ and $Q=O(\Delta)$. Moreover, for each edge $e=\{u,v\}$, define the cost function 
\begin{equation*}
c\big((u, \ell_u),(v, \ell_v)\big) = \begin{cases}
1 &\text{if $\ell_u= \ell_v$}\\
0 &\text{if $\ell_u\neq \ell_v$}
\end{cases}
\end{equation*}
 Observe that for these fractional assignment and edge costs, we can upper bound the total cost as
\begin{align}\nonumber
C &= \sum_{\set{u,v}\in E}\sum_{(\ell_u,\ell_v)\in\calL^2} 
       x_{u,\ell_u}\cdot x_{v,\ell_v}\cdot c\big((u,\ell_u), (v,\ell_v)\big)  \\\nonumber
    &= \sum_{\set{u,v}\in E} \sum_{\ell \in \hat{L}_u\cap \hat{L}_v} x_{u,\ell}\cdot x_{v,\ell}
    \qquad\quad = \sum_{\set{u,v}\in E} \sum_{\ell \in \hat{L}_u\cap \hat{L}_v} \frac{1}{\hat{d}(u) \cdot \hat{d}(v)}    \\\nonumber
    &= \sum_{\set{u,v}\in E} |\hat{L}_u\cap \hat{L}_v| \cdot\frac{1}{\hat{d}(u) \cdot \hat{d}(v)} 
    \ \ \ = \frac{1}{2}\cdot\sum_{u\in V} \sum_{v \in N(u)} |\hat{L}_u\cap \hat{L}_v| \cdot\frac{1}{\hat{d}(u) \cdot \hat{d}(v)} \\\nonumber
    & \leq \frac{1}{2}\cdot\sum_{u\in V} \sum_{v \in N(u)} \hat{d}(v)\cdot \frac{1}{\hat{d}(u) \cdot \hat{d}(v)} 
    \ \leq  
    \frac{1}{2}\cdot\sum_{u\in V} \sum_{v \in N(u)} \frac{1}{\hat{d}(u)}  \\\label{eq:LOCALinitialcost}
    &= \frac{1}{2}\cdot\sum_{u\in V} \frac{d(u)}{\hat{d}(u)}\ \ \leq\ \ \frac{1}{2}\cdot\sum_{u\in V} 2\ \ =\ \ n.
    \end{align}     
As a side comment, one can view the above fractional assignment as a randomized algorithm where each node $v$ picks a uniformly random color in $\hat{L}_v$ and the cost function is simply the expected number of monochromatic edges --- those edges that both endpoints get the same color --- under this random color assignment.  

\smallskip
\paragraph{Integral Color Assignment.} By applying \Cref{crl:labeling}, with $\eps=1$ and $Q=O(\Delta)$, we get an integral assignment with cost at most $2n$, in $O(\log^2 \Delta)$ rounds and using messages of size $O(\Delta \log(\mathcal{C}) )$ bits. Once we have integral assignments, for each node $v$, we have $x_{v, \ell} =1$ for exactly one color $\ell \in \hat{L}_v$ and we have $x_{v, \ell'}=0$ for any other color $\ell'$. Hence, each node $v$ is assigned exactly one color $\ell\in \hat{L}_v \subseteq L_v$. Moreover, in this case, the cost $$\sum_{\set{u,v}\in E}\sum_{(\ell_u,\ell_v)\in\calL^2} x_{u,\ell_u}\cdot x_{v,\ell_v}\cdot c\big((u,\ell_u), (v,\ell_v)\big)$$ is simply the total number of monochromatic edges, i.e., the total number of edges $e=\{u, v\}$ whose both endpoints $u$ and $v$ received the same color $\ell$ as their integral color assignment. Hence, we have a color allocation to the nodes such that the number of monochromatic edges is at most $2n$. But this coloring might be improper and neighboring nodes might have the same color. 
       
\smallskip    
\paragraph{Proper Coloring.} We now show that we can compute a proper coloring for at least a $1/10$ fraction of the nodes. Let $S$ be the set of nodes that have at most $4$ monochromatic edges incident on them. We can conclude that $|S|\geq n/2$. Let $E'$ be the set of monochromatic edges with both points in $S$, and let $\mathcal{H}'=(S, E')$. We compute a maximal independent set $S'\subseteq S$ in $\mathcal{H}'$, in $O(\log^* \Delta)$ rounds. This can be done easily by using Linial's algorithm\cite{linial1987LOCAL} for this constant degree graph $\mathcal{H}'$. The runtime is $O(\log^* \Delta)$ rounds, instead of $O(\log^* n)$ rounds, because we employ the $O(\Delta^2)$ coloring $\phi$ computed at the very beginning as the initial set of colors in Linial's algorithm\cite{linial1987LOCAL}. This is exactly the point that we discussed in footnote \ref{note1}.
       
Notice that since $\mathcal{H}'$ has maximum degree at most $4$, we have $S'\geq |S|/5 \geq n/10$. We consider the integral assignments of nodes of $S'$ as their permanent color. Observe that two nodes $u, v\in S'$ might be neighboring in $G$ but then their integral color assignment must be on different colors, as otherwise the edge between them would be included in $E'$ and would be in contradiction with $S'$ being an independent set of  $\mathcal{H}'=(S, E')$. Hence, in $O(\log^2 \Delta + \log^* \Delta) = O(\log^2 \Delta)$ rounds, we managed to color at least a $1/10$ fraction of the nodes with colors from their lists such that no two neighboring nodes received the same color. 

\smallskip       
\paragraph{Wrap Up.} The above finishes the description of the procedure for coloring a constant fraction of the nodes. As mentioned at the beginning, the overall algorithm works by $O(\log n)$ repetitions of this procedure, each time on the nodes that remain uncolored. In each iteration, whenever we find permanent colors for some nodes, we remove the color permanently taken by each node from the list of the neighbors that remain uncolored. Since each permanently colored node takes away one color and one unit from the remaining degree, for each node $v$ that remains uncolored, the condition $|L_v| \geq d(v)+1$ remains satisfied, where $L_v$ is now updated to be the set of remaining colors and $d(v)$ is the number of remaining neighbors.
\end{proof}

\Cref{thm:LOCAL-coloring} directly implies a $(\mathit{degree}+1)$-list coloring algorithm in $O(\log^2\Delta\cdot\log n)$ rounds of the \LOCAL model. Plugging this into the randomized algorithm of Chang et al.~\cite{chang2018optimal} improved the randomized complexity of computing $(\Delta+1)$-coloring.

\bigskip
\noindent \emph{\textbf{\Cref{crl:randomizedColoring} (restated).} There is a randomized algorithm in the \LOCAL-mode that computes a $(\Delta+1)$-coloring in any graph with at most $n$ nodes and maximum degree at most $\Delta$  in $O(\log^3\log n)$ rounds, with high probability.}
\medskip

\begin{proof}[Proof Sketch] The randomized part of the $(\Delta+1)$-coloring algorithm of Chang et al.~\cite{chang2018optimal} runs in $O(\log^* n)$ rounds and colors most of the nodes, with the following guarantee: with high probability, each of the connected components in the subgraph induced by the nodes that remain uncolored has size $O(\log n)$. Then, we run the deterministic algorithm \Cref{thm:LOCAL-coloring} on each of these components separately. Notice that for each remaining node $v$, its list of remaining colors $L_v$ --- i.e., those colors in $\{1, \dots, \Delta+1\}$ that are not already used by colored neighbors of $v$---  has size $|L_v|\geq d(v)+1$, where $d(v)$ denotes the number of neighbors of $v$ that remain uncolored. This is because, at the start $v$ had at most $\Delta$ neighbors and among the $\Delta+1$ colors of $\{1, \dots, \Delta+1\}$, each already colored neighbor of $v$ used one color. Hence, the coloring problem in each connected component is an instance of $(\mathit{degree}+1)$-list coloring. Since the component size is $N=O(\log n)$, the algorithm of \Cref{thm:LOCAL-coloring} runs in $O(\log^3 N) = O(\log^3\log n)$ rounds. Overall, this is a round complexity of $O(\log^* n + \log^3 \log n) =O(\log^3 \log n)$.
\end{proof}

\subsection{\textsf{CONGEST} Model Algorithm}
\label{sec:CONGESTalgorithm}

\begin{theorem}\label{thm:CONGEST-coloring} There is a deterministic distributed algorithm that solves any $(\mathit{degree}+1)$-list coloring problem in $O(\log^2 \mathcal{C}\cdot \log n)$ rounds, using messages of size $O(\log \mathcal{C})$ bits. Here, $\mathcal{C}$ denotes the size of the entire space of colors, i.e., each node $v$ has a color list $L_v \subseteq \{1, \dots, \mathcal{C}\}$. 
\end{theorem}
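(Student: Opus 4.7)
The strategy is to mirror \Cref{thm:LOCAL-coloring}: an outer loop of $O(\log n)$ phases, each of which properly colors a constant fraction of the still-uncolored nodes and preserves the $(\mathit{degree}+1)$-list-coloring invariant by deleting newly-used colors from neighbors' lists. The only ingredient that needs to change is how a single phase is implemented. Whereas the \LOCAL phase applies \Cref{crl:labeling} once with label set $\calL=[\mathcal{C}]$ and therefore needs $\Omega(\Delta\log\mathcal{C})$-bit messages, in the \CONGEST setting I would decompose each phase into $O(\log\mathcal{C})$ consecutive \emph{subspace-refinement steps}, each halving every node's current list while increasing a suitable potential by only a $(1+O(1/\log\mathcal{C}))$ factor and each implemented with $O(\log\mathcal{C})$-bit messages; this is the strategy outlined in the overview and in \cite{bamberger2020coloring}.

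At the start of a subspace step every node $v$ holds a sublist $L_v\subseteq[\mathcal{C}]$ with $|L_v|>d(v)$, and we track the potential $\Phi:=\sum_{v}d(v)/|L_v|<n$. Fix a partition of the ambient palette into two halves $\calC_1,\calC_2$ (different partitions across steps, together forming a bit-decomposition of $[\mathcal{C}]$). Set up an edge-cost vertex labeling instance with label set $\calL=\{1,2\}$, initial fractional values proportional to $|L_v\cap\calC_i|/|L_v|$ (pre-rounded to $1/Q$-integrality for some $Q=\poly(\log\mathcal{C})$ chosen so that messages stay $O(\log\mathcal{C})$-bit), and edge cost
\[
c\bigl((u,i),(v,j)\bigr) := \1_{\{i=j\}}\cdot\Bigl(\frac{1}{|L_u\cap\calC_i|}+\frac{1}{|L_v\cap\calC_i|}\Bigr),
\]
which each node can evaluate from its neighbor's $O(\log\mathcal{C})$-bit sublist sizes per half. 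A direct computation shows that the fractional cost of this labeling equals $\Phi$, and that after an integral labeling $\ell_v\in\{1,2\}$ is committed, the resulting integral cost equals the new potential $\Phi':=\sum_v d'(v)/|L_v\cap\calC_{\ell_v}|$ of the refined instance (where $d'(v)$ counts only those neighbors $u$ with $\ell_u=\ell_v$). Invoking \Cref{crl:labeling} with an appropriate $k$ and $\eps=\Theta(1/\log\mathcal{C})$ produces such an integral labeling in $O(\log\mathcal{C})$ rounds with $O(\log\mathcal{C})$-bit messages---the $\log^*\gamma$ term is absorbed into $\log^*\Delta$ by precomputing an $O(\Delta^2)$-coloring once at the start, as in the \LOCAL proof---and the new potential satisfies $\Phi'\leq(1+\eps)\Phi$.

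The main obstacle is choosing the rounding parameters $Q$ and $k$ carefully so that simultaneously the message-size bound of $O(\log\mathcal{C})$ bits is met, the per-step running time of \Cref{crl:labeling} is $O(\log\mathcal{C})$, and the cumulative $(1+\eps)$-factor loss over all $O(\log\mathcal{C})$ subspace refinements is still only $\exp(O(1))=O(1)$. Once these parameters are set up correctly, the phase ends with every list shrunk to a singleton and $\Phi=O(n)$; $\Phi$ is then precisely the number of monochromatic edges in the resulting integral coloring. The final proper-coloring step is identical to the end of the proof of \Cref{thm:LOCAL-coloring}: the set of nodes with $O(1)$ monochromatic incident edges has size $\Omega(n)$, and a Linial-style MIS on its bounded-degree conflict graph---using the precomputed $O(\Delta^2)$-coloring---extracts a proper coloring for an $\Omega(1)$ fraction of them in $O(\log^*\Delta)$ further rounds. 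Each phase therefore takes $O(\log\mathcal{C})\cdot O(\log\mathcal{C})=O(\log^2\mathcal{C})$ rounds with $O(\log\mathcal{C})$-bit messages, and iterating $O(\log n)$ phases gives the claimed $O(\log^2\mathcal{C}\cdot\log n)$-round algorithm.
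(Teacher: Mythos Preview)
Your overall framework---the potential $\Phi=\sum_v d(v)/|L_v|$, the subspace-refinement strategy, the edge-cost function $c\bigl((u,i),(v,j)\bigr)=\1_{\{i=j\}}\bigl(\tfrac{1}{|L_u\cap\calC_i|}+\tfrac{1}{|L_v\cap\calC_j|}\bigr)$, and the reduction to \Cref{crl:labeling}---is exactly right and matches the paper. The identification of the fractional cost with the current $\Phi$ and of the integral cost with the new $\Phi'$ is correct. The gap is precisely the ``main obstacle'' you flagged but did not resolve: with a binary fanout $K=2$ the parameters cannot be reconciled to give $O(\log\calC)$ rounds per refinement step.

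Concretely, with $K=2$ you have $H=\Theta(\log\calC)$ refinement levels, so to keep the cumulative loss $O(1)$ you are forced to take $\eps=\Theta(1/\log\calC)$ per level. To pre-round $x_{v,\ell}=|L_v\cap\calC_\ell|/|L_v|$ to $1/Q$-integrality while increasing each value by at most a $(1+O(\eps))$ factor, you need $Q=\Omega(1/\eps)=\Omega(\log\calC)$, hence $k=\Theta(\log\log\calC)$. Plugging into \Cref{crl:labeling} gives $O(k^2/\eps)=O\bigl(\log\calC\cdot(\log\log\calC)^2\bigr)$ rounds per level and $O\bigl(\log^2\calC\cdot(\log\log\calC)^2\bigr)$ per phase---not $O(\log^2\calC)$. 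No alternative balancing of the $\eps_j$'s across the $k$ inner rounding steps helps: by Cauchy--Schwarz, $\sum_j 1/\eps_j\ge k^2/\sum_j\eps_j$, so the $k^2$ is unavoidable once $k=\Theta(\log\log\calC)$.

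The paper's fix is to increase the fanout to $K=\lfloor\sqrt{\log\calC}\rfloor$. Then the number of levels drops to $H=\log_K\calC=O(\log\calC/\log\log\calC)$, so $\eps=\Theta(1/H)=\Theta(\log\log\calC/\log\calC)$ suffices, and the pre-rounding granularity becomes $Q=\Theta(KH)$ with $k=\log Q=O(\log\log\calC)$ as before. Now each level costs $O(k^2/\eps)=O(\log\calC\cdot\log\log\calC)$ rounds, and over $H$ levels this is exactly $O(\log^2\calC)$. The message size is $O(|\calL|\cdot k)=O(K\log Q)=O(\sqrt{\log\calC}\cdot\log\log\calC)=O(\log\calC)$; this is also why $K$ cannot be taken much larger than $\sqrt{\log\calC}$. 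In short, the missing idea is not a different pre-rounding trick but the nontrivial choice of a $K$-ary (rather than binary) color-space partition with $K\approx\sqrt{\log\calC}$.
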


\begin{proof}
Similar to the proof of \Cref{thm:LOCAL-coloring}, we describe a procedure that colors a constant fraction of nodes in $O(\log^2 \mathcal{C})$ rounds. The theorem then follows by repeating this procedure $O(\log n)$ times, each time removing the colors of the colored nodes from the lists of their neighbors. 

And again, as a helper tool, at the very beginning, we compute a $O(\Delta^2)$ proper coloring $\phi$ of all vertices. This can be done in $O(\log^* n)$ time using Linial's classic algorithm\cite{linial1987LOCAL}.  See again footnote \ref{note1} for why we compute this coloring. 

\paragraph{Outline of Coloring a Constant Fraction of the Nodes.} We now describe how we color a constant fraction of the nodes. 
Let $K=\lfloor\sqrt{\log \mathcal{C}}\rfloor$. We devise a partitioning algorithm for the color space, with $H=\log_K \mathcal{C}=O(\log \mathcal{C}/\log\log \mathcal{C})$ partition levels. In the first partition level, we partition the space of colors $\{1, \dots, \mathcal{C}\}$ into $K$ equally sized parts. More generally, for each $i\in\{2, \dots, H\}$, we partition each level-$(i-1)$ part of colors into $K$ equal sized parts. Hence, at the end of $H=\log_K \mathcal{C}$ levels of partitioning, each part is simply one color.

Per iteration, after we partition the space of colors, each node chooses to follow exactly one part. Let us focus on the very first iteration. In this iteration, we partition $\{1, \dots, \mathcal{C}\}$ into $K$ equal size parts. Let us call these parts $P_1, \dots, P_K$ where $\bigcup_{i=1}^k P_i = \{1, \dots, \mathcal{C}\}$. Now, each node $v$ will choose one of these $K$ parts (in a manner that will be described). If node $v$ choose part $\ell \in\{1,\dots, K\}$, then we update its color list to be $L_{v} \cap P_\ell$. Moreover, we update the graph by removing edges whose endpoints chose different parts, i.e., we keep only edges for which both endpoints are in the same part. The idea is that the node will choose one of the colors $L_{v} \cap P_\ell \subseteq P_\ell$ and thus it will not be in conflict with neighbors who are choosing their colors from other parts $P_{\ell'}$ for $\ell'\neq \ell$.

\paragraph{How does a node choose its part?} We will come back to describing the algorithm for choosing the parts. Let us for now describe the guarantee provided by this algorithm and discuss why this guarantee is useful. For each node $v$, define its cost $\Phi(v)=d(v)/|L_v|$. Hence, at the very beginning, the total cost over all nodes is $\sum_{v\in V} d(v)/|L_v| \leq \sum_{v\in V}d(v)/(d(v)+1) < n$. The guarantee of the partitioning algorithm is that, after the partition, the total cost $\sum_{v\in V} d(v)/|L_v|$ grows by at most a $1+1/H$ factor. In the new cost, now $d(v)$ is only the neighbors in the same part $\ell$ as $v$ and $L_v$ is the colors in its original list that are in its chosen part $P_{\ell}$. 

We repeat this partitioning for $H=\log_K \mathcal{C}$ levels, such that per level the total cost grows by at most a $1+1/H$ factor. At the end, each part is a single color and each node is in one of these parts. Moreover, we can upper bound the cost after all these levels $\sum_{v\in V} d(v)/|L_v| \leq n (1+1/H)^{H} < 3n$. We next discuss how this small cost allows us to properly color a constant fraction of nodes.

Let $S$ be the set of nodes $v$ for which $\Phi(v)= d(v)/|L_v| \leq 4$. Since $\sum_{v\in V} \Phi(v) < 3n$, we have $|S|\geq n/4$. At this point, we can properly color a constant fraction of $S$ in a similar way as we did in the proof of \Cref{thm:LOCAL-coloring}. In particular, let $E'$ be the set of monochromatic edges --- edges whose both endpoints have the same single-color list--- with both points in $S$, and let $\mathcal{H}'=(S, E')$. We compute a maximal independent set $S'\subseteq S$ in $\mathcal{H}'$, in $O(\log^* \Delta)$ rounds. This can be done easily by using Linial's algorithm\cite{linial1987LOCAL} for this constant degree graph $\mathcal{H}'$. The runtime is $O(\log^* \Delta)$ rounds, instead of $O(\log^* n)$ rounds, because we employ the $O(\Delta^2)$ coloring $\phi$ computed at the very beginning as the initial set of colors in Linial's algorithm\cite{linial1987LOCAL}. Notice that since $\mathcal{H}'$ has maximum degree at most $4$, we have $S'\geq |S|/5 \geq n/20$. We consider the color assignments of nodes of $S'$ as their permanent color. Hence, in $O(\log^2 \Delta + \log^* \Delta) = O(\log^2 \Delta)$ rounds, we managed to color at least a $1/20$ fraction of the nodes with colors from their lists such that no two neighboring nodes received the same color. As in \Cref{thm:LOCAL-coloring}, by $O(\log n)$ repetitions of this procedure, we can color all nodes in $O(\log^2 \Delta \log n)$ rounds. 

\paragraph{Partitioning Algorithm.} What remains from the above outline is to explain how each one level of color partitioning works, and in particular, how each node chooses its part so that the total cost function $\sum_{v\in V} d(v)/|L_v|$ does not increase by more than a $1+1/H$ factor. We describe the process for the very first level of partitioning; the same process is applies in all other levels. We first describe a fractional assignment of nodes to the parts and then explain how to turn this to an integral assignment, using the vertex labeling algorithm of the previous section.

\paragraph{Fractional Partition Assignment.}  For each node $v$ and each $\ell\in \{1, \dots, K\}$, define the fractional assignment $x_{v, \ell} = \frac{|L_v\cap P_\ell|}{|L_v|}$. That is, $x_{v, \ell}$ is the fraction of the list of node $v$ that is in the $\ell^{th}$ part. 
As a side comment, one can view this fractional assignment as the probabilities in a randomized algorithm where node $v$ chooses each part $\ell$ with probability $x_{v, \ell} = \frac{|L_v\cap P_\ell|}{|L_v|}$. Notice that $\sum_{\ell} x_{v, \ell}=1.$ Moreover, define the cost function of each edge $e=\{u,v\}$ as follows: 

\begin{equation*}
c\big((u, \ell_u),(v, \ell_v)\big) = \begin{cases}
\frac{1}{|L_v \cap P_{\ell}|} +\frac{1}{|L_u \cap P_{\ell}|} &\text{if $\ell_u= \ell_v = \ell$}\\
0 &\text{if $\ell_u\neq \ell_v$}
\end{cases}
\end{equation*}
We can now observe that for this fractional assignment and the edge costs defined above, the total cost over all the edges is exactly equal to the cost $\Phi=\sum_{v\in V} \frac{d(v)}{|L_v|}$. The argument is as follows:
\begin{align}\nonumber
& \sum_{\set{u,v}\in E}\sum_{(\ell_u,\ell_v)\in\calL^2} 
   x_{u,\ell_u}\cdot x_{v,\ell_v}\cdot c\big((u,\ell_u), (v,\ell_v)\big)   \\ \nonumber
= & \sum_{\set{u,v}\in E}\sum_{\ell=1}^{K} x_{u,\ell_u}\cdot x_{v,\ell_v}\cdot c\big((u,\ell), (v,\ell)\big) \\\nonumber
=  &\sum_{\set{u,v}\in E}\sum_{\ell=1}^{K}  \frac{|L_u\cap P_\ell|}{|L_u|} \cdot  \frac{|L_v\cap P_\ell|}{|L_v|} \cdot \bigg( \frac{1}{|L_v \cap P_{\ell}|} +\frac{1}{|L_u \cap P_{\ell}|}\bigg)  \\ \nonumber
=  &\sum_{\set{u,v}\in E}\sum_{\ell=1}^{K} \bigg(\frac{|L_u\cap P_\ell|}{|L_v| \cdot |L_u| }+ \frac{|L_v\cap P_\ell|}{|L_v| \cdot |L_u| }\bigg) \\\nonumber
= & \sum_{\set{u,v}\in E} \bigg(\frac{\sum_{\ell=1}^{K} |L_u\cap P_\ell|}{|L_v| \cdot |L_u| } + \frac{\sum_{\ell=1}^{K} |L_v\cap P_\ell|}{|L_v| \cdot |L_u| }\bigg) \\\nonumber
= & \sum_{\set{u,v}\in E} \bigg(\frac{|L_u|}{|L_v| \cdot |L_u| } + \frac{|L_v|}{|L_v| \cdot |L_u| }\bigg) \\\label{eq:initialpotential}
= & \sum_{\set{u,v}\in E} \bigg(\frac{1}{|L_v| } + \frac{1}{|L_u|}\bigg)\ =\ \sum_{v\in V} \frac{d(v)}{|L_v|}.
\end{align}

\paragraph{Turning the Fractional Assignment to an Integral Assignment.} We now would like to turn the above fractional assignment to an integral assignment.
First note that for an integral assignment of parts, if each node $v$ is assigned one of the parts $P_{\ell_v}$, the total cost is equal to
\begin{equation}\label{eq:integrallistcost}
    \sum_{\set{u,v}\in E} \1_{\ell_u=\ell_v}\cdot \left(
    \frac{1}{|L_u\cap P_u|} + \frac{1}{|L_v\cap P_v|}
    \right) =
    \sum_{u\in V}\sum_{v\in N(u)}\frac{\1_{\ell_u=\ell_v}}{|L_u\cap P_{\ell_u}|} = \sum_{u\in V}\frac{d'(u)}{|L_u'|},
\end{equation}
where $d'(u)$ is the number of neighbors of $u$ that choose the same part and $L_u'$ is the new list of $u$. The cost is therefore equal to the potential function for the new problem with reduced color space and lists.

The idea for rounding the fractional color space assignment to an integral color space assignment is to apply the algorithm of the previous section, as summarized in \Cref{crl:labeling}. There is only one issue that should be resolved before. Observe that in the fractional assignment above, the fractional values $x_{v, \ell} = \frac{|L_v\cap P_\ell|}{|L_v|}$ can be an arbitrarily small (positive) value. Thus, we cannot guarantee that these fractional values are $1/Q$-integral for a moderately small value of $Q$. We next discuss how to fix this issue.

Let $\rho$ be the greatest power of $2$ that is less than or equal to $1/(10KH)$. That is, $\rho = 2^{-\lceil\log_2 10KH\rceil}$. Notice that $1/\rho = 2^{\lceil\log_2 10KH\rceil}$ is an integer. We now slightly perturb the fractional values $x_{v, \ell}$ in a way that (1) each $x_{v, \ell}$ becomes a multiple of $\rho$, (2) we maintain $\sum_{\ell} x_{v, \ell}=1$, and (3) each $x_{v, \ell}$ is increased by at most a $(1+1/(4H))$ factor. 

First, we round down each $x_{v, \ell}$ to the largest multiple of $\rho$. That is, we define $x'_{v, \ell} =  \rho \lfloor x_{v, \ell}/\rho\rfloor$. This satisfies properties (1) and (3) but breaks property (2). To fix that, define the overall residue as $r=\sum_{\ell} \big(x_{v, \ell} - x'_{v, \ell}\big)$. 
We would like to add this residue in integer multiples of $\rho$ to $x'_{v, \ell}$. We allow each $x'_{v, \ell}$ to increase to at most $x'_{v, \ell} + \rho\lfloor \frac{x'_{v, \ell}/(4H)}{\rho}\rfloor \leq x'_{v, \ell} + x'_{v, \ell}/(4H)$. That is, we have $\rho\lfloor \frac{x'_{v, \ell}/(4H)}{\rho}$ capacity for increase for $x'_{v, \ell}$. Notice that this capacity is an integer multiple of $\rho$. Moreover, these capacities over all the labels provide sufficient room for admitting all the residue addition and getting back to satisfy property (2). The reason is as follows. For each $\ell$, we have $x_{v, \ell} - x'_{v, \ell}<\rho$. Thus, the overall residue $r=\sum_{\ell} \big(x_{v, \ell} - x'_{v, \ell}\big)$ is at most $1/(10H)$. On the other hand, our total capacity for the residues is at least $$\sum_{\ell} \big(x'_{v, \ell}/(4H) - \rho \big) = \big(\sum_{\ell} x'_{v, \ell}/(4H)\big) - K\rho \geq \frac{1-1/(10H)}{4H} - \frac{K}{10KH} \geq \frac{1}{10H} \geq r.$$ Hence, we can allocate all of the residue to the labels $\ell$ in integer multiples of $\rho$, such that each $x'_{v, \ell}$ increases to at most $x'_{v, \ell} + (\rho)\lfloor \frac{x'_{v, \ell}/(2H)}{\rho}\rfloor \leq x'_{v, \ell} (1+1/(4H))$. Let $x''_{v, \ell} $ be the assignment after these increases for label $\ell$. At this point, (1) each $x''_{v, \ell}$ is a multiple of $\rho$, (2) we have $\sum_{\ell} x''_{v, \ell}=1$, and (3) each $x''_{v, \ell} \leq x'_{v, \ell} (1+1/(4H)) \leq x_{v, \ell} (1+1/(4H))$. 

Because of the third property, we can conclude that the potential function increases by at most a $(1+1/(4H))^2$ factor during this step. Because of property (1), now each fractional assignment is $\rho$-integral, i.e., it is $1/Q$-integral for $Q=1/\rho$. Now, we invoke the algorithm of \Cref{crl:labeling} on these fractional values, setting $Q=1/\rho = O(KH)$ and $\eps = 1/(2H)$. We get an algorithm with round complexity $O(\log^2{Q}/\eps) = O(H\log^2(KH)) = O(\log \mathcal{C} \cdot \log \log \mathcal{C})$ where the last inequality follows from $H=\Theta(\log \mathcal{C} / \log \log \mathcal{C} )$ and $K= \Theta(\sqrt{\log \mathcal{C}})$. The message size is $O(\mathcal{L} \log Q) = O(K \log{KH}) = O(\sqrt{\log \mathcal{C}} \cdot \log\log \mathcal{C}) = O(\log \mathcal{C}).$ By \Cref{eq:integrallistcost}, the cost after rounding is equal to the new potential and we therefore now have an integral assignment of one part among $\{1,\dots, K\}$ in a way that the potential $\sum_{v\in V}d(v)/|L_v|$ increased by at most a factor of $(1+1/(4H))^2 (1+1/(2H)) \leq (1+1/H)$. This completes the description for one level of partitioning.

\paragraph{Wrap Up.} As mentioned before, we repeat the above process of partitioning the color space by a $K$ factor for $H=\Theta(\log \mathcal{C} / \log \log \mathcal{C} )$ repetitions. This takes $O(\log \mathcal{C} \cdot \log \log \mathcal{C})$ rounds per level and thus $O(\log^2 \mathcal{C})$ rounds overall. At the end, each part of the color space is a single color and we have assignments of nodes to parts (single colors) such that $\sum_{v\in V}d(v)/|L_v| \leq n(1+1/(H))^{H} < 3n$. At this point, as explained above, we can find a set $S$ of at least $n/4$ nodes $v$ each with potential $\Phi(v)\leq 4$ and then a maximal independent set $S'$ of size at least $n/20$ among these which then get colored permanently. 

The above completes the procedure for coloring a constant fraction of nodes. As described before, via $O(\log n)$ repetitions of this procedure, we color all the nodes. Since each repetition takes $O(\log^2 \mathcal{C})$ rounds, the overall round complexity is $O(\log^2 \mathcal{C} \log n)$ rounds. Moreover, we use messages of size at most $O(\log \mathcal{C})$. This completes the proof of \Cref{thm:CONGEST-coloring}.
\end{proof}

\begin{corollary}\label[corollary]{crl:Delta+1-coloring}
There is a deterministic distributed algorithm that on, any $n$-node graph with maximum degree at most $\Delta$, computes a $(\Delta+1)$-vertex coloring in $O(\log^2 \Delta \log n)$ rounds and using messages of size $O(\log n)$ bits.
\end{corollary}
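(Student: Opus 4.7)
The plan is to derive this corollary essentially as an immediate instantiation of \Cref{thm:CONGEST-coloring}. I will first observe that the $(\Delta+1)$-vertex coloring problem is the special case of the $(\mathit{degree}+1)$-list coloring problem in which every node $v\in V$ is given the same initial list $L_v=\{1,\dots,\Delta+1\}$. Since $|L_v|=\Delta+1\geq d(v)+1$ for every node $v$, this is a valid instance of the $(\mathit{degree}+1)$-list coloring problem with color-space size $\mathcal{C}=\Delta+1$.

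Next, I would directly invoke \Cref{thm:CONGEST-coloring} on this instance. The theorem guarantees a deterministic algorithm with round complexity $O(\log^2 \mathcal{C}\cdot\log n)=O(\log^2(\Delta+1)\cdot \log n)=O(\log^2\Delta\cdot\log n)$ using messages of size $O(\log\mathcal{C})=O(\log\Delta)$ bits. Since $\Delta\leq n$, the message size bound is subsumed by $O(\log n)$ bits, matching the statement of the corollary. (Strictly speaking, for $\Delta=O(1)$ we can fall back on Linial's $O(\log^* n)$-round algorithm to get an $O(\Delta^2)$-coloring, but this does not affect the asymptotic bound.)

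There is no genuine obstacle here: the whole content of the corollary is that the $(\mathit{degree}+1)$-list coloring result from \Cref{thm:CONGEST-coloring} specializes to $(\Delta+1)$-coloring. The only minor thing to check is that the initial knowledge assumption of the list-coloring algorithm is satisfied, namely that each node knows its own list; this is trivially true because in $(\Delta+1)$-coloring every node knows the full common list $\{1,\dots,\Delta+1\}$ as soon as it knows the upper bound $\Delta$ on the maximum degree, which the model assumes. The final write-up therefore reduces to a one-paragraph reduction: cast $(\Delta+1)$-coloring as $(\mathit{degree}+1)$-list coloring with $\mathcal{C}=\Delta+1$ and apply \Cref{thm:CONGEST-coloring}.
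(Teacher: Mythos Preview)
Your proposal is correct and follows exactly the same approach as the paper: set $L_v=\{1,\dots,\Delta+1\}$ for every node, observe $|L_v|\geq d(v)+1$, and apply \Cref{thm:CONGEST-coloring} with $\mathcal{C}=\Delta+1$. The paper's own proof is the same one-line reduction, without the extra remarks about message size and the $\Delta=O(1)$ case.
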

\begin{proof}
The claim follows immediately from \Cref{thm:CONGEST-coloring} by defining the list $L_v$ of each node $v$ as $L_v=\{1, \dots, \Delta+1\}$, which clearly satisfies $|L_v|\geq d(v)+1$, and defining the space of colors $\{1, \dots, \mathcal{C}\}= \{1, \dots, \Delta+1\}$. 
\end{proof}

\section{Improved Algorithm for Layered Graphs and Implications}
\label{sec:layeredgraphs}

\newcommand{\dup}{\widehat{\delta}}
\newcommand{\Dup}{\widehat{\Delta}}

We next prove an improved upper bound for layered graphs of the following form. Later, in \Cref{crl:ArbColoring} and \Cref{crl:DeltaColoring}, we use this structure to provide improved algorithms for coloring low-arboricity graphs and for $\Delta$-coloring. 

\subsection{Layered Graphs}
\begin{definition}[Layered Graph] \label{def:layeredGraph}
  A graph $G=(V,E)$ is called a \emph{$h$-layered graph} for some $h\geq 1$ if the nodes are partitioned into $h$ layers $V_1,\dots,V_h$. The \emph{up-degree} $\dup(v)$ of a node $v\in V_i$ for some $i\in \set{1,\dots,h}$ is defined as $\dup(v):=|N(v)\cap (V_i\cup\dots\cup V_h)|$, i.e., the number of neighbors in $v$'s own and in higher layers.
\end{definition}

Note that if we are given a layered graph $G=(V_1\cup\dots\cup V_h,E)$ and an assignment of color lists $L_v$ such that for for $v\in V$, $|L_v|\geq\dup(v)+1$, the corresponding list coloring problem can be solved by a simple greedy algorithm. We go through $h$ phases $p=0,\dots,h-1$, where in phase $p$, we color the nodes in $V_{h-p}$. In phase $p$, every node $v\in V_{h-p}$ has at most $\dup(v)$ neighbors that are either already colored (i.e., in higher layers) or that are in the currently active layer $V_{h-p}$. The problem of coloring each node in $V_{h-p}$ with an available color from its list is therefore a $(\mathit{degree}+1)$-coloring problem. By using our algorithms from \Cref{sec:algorithms}, this therefore also directly implies that we can deterministically solve the given list coloring problem in $O(h\cdot\log^2\Dup\cdot\log n)$ rounds in the \LOCAL model, where $\Dup$ is the maximum up-degree of $G$. In the following, we show that this round complexity can be improved to $O(\log^2\Dup\cdot \log n + h\cdot \log^3\Dup)$. 

\subsection{Coloring a Constant Fraction of the Node Weights}
In the following, for a graph $G=(V,E)$, a given fractional vertex labeling $x_{v,\ell}$, and a corresponding edge cost function $c:(V\times \calL)^2 \to \R_{\geq0}$, we define the \emph{local edge cost} $c_u$ of a node $u\in V$ as
\[
c_u := \sum_{v\in N(u)}\sum_{(\ell_u,\ell_v)\in\calL^2}
x_{u,\ell_u}\cdot x_{v,\ell_v}\cdot c\big((u,\ell_u),(v,\ell_v)\big).
\]
That is $c_u$ is the total cost of all edges incident to node $u$. We say that a fractional vertex labeling has local edge cost at most $\mu$ for some $\mu\geq 0$ if $c_u\leq \mu$ for all $u\in V$.

\begin{lemma}\label{lemma:weightedcoloringLOCAL}
  Let $G=(V,E)$ be a properly $O(\Delta^2)$-colored graph with vertex weights $w_v\geq 0$ and maximum degree at most $\Delta$. Given a $(\mathit{degree}+1)$-list coloring instance of $G$, there is a deterministic $O(\log^2\Delta)$-round \LOCAL algorithm to compute a valid partial solution for the given list coloring instance such that the total weight of all colored nodes is at least half of the total weight of all the nodes of $G$.
\end{lemma}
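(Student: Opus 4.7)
The plan is to follow the structure of the proof of \Cref{thm:LOCAL-coloring}, but to replace the counting arguments there with weighted analogues throughout. For each node $v$ I would use the same initial fractional labeling as in that proof: let $\hat d(v)$ be the largest power of two with $\hat d(v)\leq d(v)+1$, pick an arbitrary $\hat L_v\subseteq L_v$ of size $\hat d(v)$, and set $x_{v,\ell}=1/\hat d(v)$ for $\ell\in\hat L_v$ and $0$ otherwise. This is $1/Q$-integral for some $Q=O(\Delta)$. To incorporate the weights, however, I would use the weighted edge-cost function $c\big((u,\ell_u),(v,\ell_v)\big)=w_u+w_v$ when $\ell_u=\ell_v$ and $0$ otherwise. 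Reproducing the calculation of \eqref{eq:LOCALinitialcost} verbatim but with these weights gives that the initial fractional cost is at most $2W$, where $W:=\sum_v w_v$. Plugging this into \Cref{crl:labeling} with $\epsilon=1$, $k=O(\log\Delta)$, and the precomputed $O(\Delta^2)$-coloring of $G$ produces in $O(\log^2\Delta)$ rounds an integral color assignment $\ell_v$ whose total weighted monochromatic cost $\sum_{\{u,v\}:\ell_u=\ell_v}(w_u+w_v)$ is at most $4W$.

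Next I would extract a valid partial coloring of large weight. Define the local weighted cost $c_v:=\sum_{u\in N(v):\ell_u=\ell_v}(w_u+w_v)$. Because every monochromatic edge is counted at both of its endpoints, $\sum_v c_v \leq 8W$. Call $v$ \emph{good} if $c_v\leq K\,w_v$ for a sufficiently large absolute constant $K$; a weighted Markov argument then gives $W_{\mathrm{bad}}\leq 8W/K$, so the good nodes account for at least a $(1-8/K)$-fraction of $W$. Crucially, since every monochromatic neighbor of a good node $v$ contributes at least $w_v$ to $c_v$, each good node has at most $K$ monochromatic neighbors. Hence the monochromatic subgraph $H$ induced on the good nodes has constant maximum degree $K=O(1)$. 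Starting from the precomputed $O(\Delta^2)$-coloring and applying a few Linial refinement steps, I can $O(1)$-color $H$ in $O(\log^*\Delta)$ rounds, and then compute a weight-aware maximal independent set $S$ of $H$ by a standard greedy routine (with priorities $(w_v,\mathrm{ID}(v))$) on this constant-degree graph in $O(\log^*\Delta)$ additional rounds; such an $S$ is guaranteed to have total weight at least $W_{\mathrm{good}}/(K+1)=\Omega(W)$. The nodes of $S$ keep the color $\ell_v$ produced by the rounding, yielding a valid partial list coloring.

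The main obstacle I expect is that a single call of this procedure only colors weight $\Omega(W)$, not literally \emph{half} of $W$ as stated. I would close this gap by the standard boosting iteration: each call deterministically colors a set of weight at least $\alpha W$ for some absolute constant $\alpha>0$, and after the colored nodes are removed from their neighbors' lists the $(\mathit{degree}+1)$-list-coloring invariant is preserved, so the procedure can be re-invoked on the remaining uncolored subgraph. Running it $t=\lceil \ln 2/\alpha\rceil=O(1)$ times ensures that the total colored weight reaches at least $(1-(1-\alpha)^t)W\geq W/2$; since each call costs $O(\log^2\Delta)$ rounds, the overall complexity remains $O(\log^2\Delta)$. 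The one technically delicate point inside this outline is the weighted-MIS extraction: care is needed there to guarantee that the chosen independent set captures a constant fraction of $W_{\mathrm{good}}$ (in weight), not merely a constant fraction of the number of good nodes, which is why the weight-aware greedy on the $O(1)$-colored graph $H$ (rather than an arbitrary MIS) is the right primitive to use.
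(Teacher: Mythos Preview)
Your approach is essentially the paper's: the same weighted edge-cost $c'\big((u,\ell_u),(v,\ell_v)\big)=(w_u+w_v)\cdot\1_{\ell_u=\ell_v}$, the same rounding via \Cref{crl:labeling}, the same extraction of a large-weight set of nodes with bounded monochromatic degree, and the same $O(1)$-fold boosting at the end. The paper phrases the middle step slightly more directly---it observes that $\sum_{\{u,v\}\ \mathrm{mono}}(w_u+w_v)=\sum_u w_u\,d_u$ and bounds the weight of $\{u:d_u\ge 4\}$ immediately---whereas you route through the quantity $c_v$, but the content is the same.

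There is one genuine gap, precisely at the step you already flag as delicate. Your proposed ``standard greedy with priorities $(w_v,\mathrm{ID}(v))$'' does \emph{not} run in $O(\log^*\Delta)$ rounds, even on a constant-degree graph: on a path $v_1\!-\!v_2\!-\!\cdots\!-\!v_m$ with strictly decreasing weights, only the current leftmost undecided node is a local maximum in each round, so the greedy needs $\Theta(m)$ rounds. Having an $O(1)$-coloring of $H$ lets you compute an \emph{unweighted} MIS in $O(1)$ rounds, but it does not turn the priority greedy into a constant-round procedure, and an arbitrary MIS of $H$ need not carry a constant fraction of $W_{\mathrm{good}}$. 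The paper closes this gap by invoking Theorem~1 of \cite{KawarabayashiKS20}, which shows that in any vertex-weighted graph of maximum degree $K$ one can compute an $O(K)$-approximate maximum-weight independent set in the time needed to compute an MIS; applied to $H$ (degree $\le K=O(1)$, with the given $O(\Delta^2)$-coloring) this yields a weight-$\Omega(W_{\mathrm{good}})$ independent set in $O(\log^*\Delta)$ rounds.

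One further minor patch: your implication ``$c_v\le K\,w_v\Rightarrow$ at most $K$ monochromatic neighbors'' fails when $w_v=0$. Restrict the good set to positive-weight nodes (this loses no weight); since any weight-$0$ node with a positive-weight monochromatic neighbor has $c_v>0=K\,w_v$ and is therefore bad, the monochromatic graph induced on positive-weight good nodes indeed has maximum degree $\le K$.
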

\begin{proof}
  Note first that it is sufficient to show that we can compute a partial coloring for which the total weight of the colored nodes is an arbitrary constant fraction of the total weight of all the nodes. The claim of the lemma then follows by running this algorithm $O(1)$ times, each time for the $(\mathit{degree}+1)$-list coloring instance induced by the uncolored nodes of $G$.
  
  Let $x_{v,\ell}$ be a fractional color assignment so that for each node $v\in V$, $x_{v,\ell}>0$, only if color $\ell$ appears in $v$'s list $L_v$ of colors. We consider the edge cost function $c$ for which $c\big((u,\ell_u),(v,\ell_v)\big)=1$ if $\ell_u=\ell_v$ and the cost is equal to $0$ otherwise. That is, the cost of an edge is equal to $1$ is the edge is monochromatic and it is equal to $0$ otherwise.
  In \Cref{sec:LOCALalgorithm} (proof of \Cref{thm:LOCAL-coloring}), we show that every node $v$ can locally (without communication) compute a $2^{-\lfloor\log_2\Delta\rfloor}$-fractional assignment of colors from its list $L_v$ such that for every node $v\in V$, $c_v\leq 2$ (cf.\ \Cref{eq:LOCALinitialcost}). 
  
  We define a new edge cost function $c':(V\times \calL)^2 \to \R_{\geq0}$ as follows 
  \[
  \forall \set{u,v}\in E,\ \forall (\ell_u,\ell_v)\in \calL^2\,:\,
  c'\big((u,\ell_u),(v,\ell_v)\big) :=
  (w_u+w_v)\cdot c\big((u,\ell_u),(v,\ell_v)\big).
  \]
  That is, if the edge $\set{u,v}$ is monochromatic, the edge cost is $w_u+w_v$ and the edge cost is $0$ otherwise. Note that the total edge cost $C'$ w.r.t.\ the new edge cost function $c'$ can be bounded as
  \begin{eqnarray}\nonumber
    C' & = & \sum_{\set{u,v}\in E}\sum_{(\ell_u,\ell_v)\in\calL^2}
    x_{u,\ell_u}\cdot x_{v,\ell_v}\cdot (w_u+w_v)\cdot
    c\big((u,\ell_u),(v,\ell_v)\big)\\ \nonumber
    & = & \sum_{u\in V}\sum_{v\in N(u)}\sum_{(\ell_u,\ell_v)\in\calL^2}
    x_{u,\ell_u}\cdot x_{v,\ell_v}\cdot w_u\cdot
    c\big((u,\ell_u),(v,\ell_v)\big)\\ \label{eq:initialweight}
    & = & \sum_{u\in V} w_u\cdot c_u\ \leq\ 2\cdot W,
  \end{eqnarray}
  where $c_u$ denote the local edge cost of node $u$ w.r.t.\ the original edge cost function $c$ and where $W:=\sum_{v\in V} w_v$ denote the total weight of all nodes.
  
  To obtain an integral color assignment of $G$, we apply \Cref{crl:labeling} with $\eps=1/3$ and with the new edge cost function $c'$. For every node $u\in V$, assume that $\ell_u$ is the color assigned to node $u$. By \Cref{crl:labeling}, the round complexity for computing the integral color assignment is $O(\log^2\Delta)$ and we know that the computed integral labeling has at most $4/3$ times the cost of the original fractional labeling. We therefore have
  \begin{equation}\label{eq:roundedweight}
  \sum_{\set{u,v}\in E} (w_u+w_v)\cdot c\big((u,\ell_u),(v,\ell_v)\big) =
  \sum_{\set{u,v}\in E} c'\big((u,\ell_u),(v,\ell_v)\big) \leq
  \frac{4}{3}\cdot C'
  \stackrel{\eqref{eq:initialweight}}{\leq}
  \frac{8}{3}\cdot W.
  \end{equation}
  For a node $u\in V$, let $d_v$ be the number of neighbors $v$ for which $\ell_v=\ell_u$ (i.e., the number of neighbors that selected the same color). Let $V_H:=\set{u\in V : d_u\geq 4}$ be the set of nodes with at least $4$ monochromatic edges and let $W_H:=\sum_{v\in V_H} w_v$ be the total weight of the nodes in $V_H$. We have
  \begin{eqnarray*}
    \frac{8}{3}\cdot W & \stackrel{\eqref{eq:roundedweight}}{\geq} & 
    \sum_{\set{u,v}\in E} (w_u+w_v)\cdot c\big((u,\ell_u),(v,\ell_v)\big)\\
    & = & \sum_{u\in V} w_u\cdot d_u\ \geq\ \sum_{u\in V_H} w_u\cdot d_u
    \ \geq \ 4\cdot W_H.
  \end{eqnarray*}
  We therefore have $W_H\leq 2W/3$ and the total weight of node $v$ with $d_v<4$ (and thus $d_v\leq 3$) is therefore at least $W/3$. Let $V_L:=V\setminus V_H$ be the nodes $d_v<4$ and for each color $\ell$, let $V_L(\ell)$ be the nodes in $V_L$ that chose color $\ell$. For every $\ell$, the graph $G[V_L(\ell)]$ has maximum degree at most $3$. In Theorem 1 of \cite{KawarabayashiKS20}, it is shown that in vertex-weighted graphs of maximum degree $\Delta$, an $O(\Delta)$-approximation for the maximum weighted independent set problem can be computed in the time to compute an MIS in such graphs. In each graph, $G[V_L(\ell)]$, we can therefore select an independent set of weight at least a constant fraction of the total weight of all node in $V_L(\ell)$ in time $O(\log^*\Delta)$ (recall that we assumed an initial proper $O(\Delta^2)$-coloring of $G$). By selecting such a set for each color class, a constant fraction of the weight in $V_L$ and therefore also a constant fraction of the weight in $V$ can be colored.
\end{proof}

\begin{lemma}\label{lemma:weightedcoloringCONGEST}
  Let $G=(V,E)$ be a properly $O(\Delta^2)$-colored $n$-node graph with vertex weights $w_v\geq 0$. Given a $(\mathit{degree}+1)$-list coloring instance of $G$ with lists from a color space of size $\calC$, there is a deterministic $O(\log^2\calC)$-round distributed algorithm to compute a valid partial solution for the given list coloring instance such that the total weight of all colored nodes is at least half of the total weight of all the nodes of $G$. If each node initially knows the weights of all its neighbors, the algorithm uses messages of size at most $O(\log\calC + \log n)$ bits.
\end{lemma}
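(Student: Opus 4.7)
The plan is to adapt the partition-based \CONGEST algorithm from \Cref{thm:CONGEST-coloring} to the node-weighted setting, mirroring how \Cref{lemma:weightedcoloringLOCAL} adapted the \LOCAL algorithm of \Cref{thm:LOCAL-coloring}. As in the unweighted case, it suffices to color only a constant fraction of the total weight, since $O(1)$ repetitions on the residual $(\mathit{degree}+1)$-list coloring instance of the as-yet-uncolored nodes will then amplify this to at least half.

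I would re-use the recursive color-space partitioning of \Cref{thm:CONGEST-coloring} verbatim: at each of $H = \Theta(\log\calC/\log\log\calC)$ levels, every current part of the color space is split into $K = \lfloor\sqrt{\log\calC}\rfloor$ equal subparts $P_1,\dots,P_K$, and every node picks one of them, updating its list to $L_v\cap P_{\ell_v}$ and discarding edges whose endpoints go to different subparts. The only substantive change is to track the \emph{weighted} potential $\Phi := \sum_{v\in V} w_v\cdot d(v)/|L_v|$, and to replace the edge cost function of \Cref{thm:CONGEST-coloring} by its weighted analogue
\[
c\bigl((u,\ell_u),(v,\ell_v)\bigr) := \begin{cases}\dfrac{w_u}{|L_u\cap P_\ell|}+\dfrac{w_v}{|L_v\cap P_\ell|} & \text{if } \ell_u=\ell_v=\ell,\\[2pt] 0 & \text{otherwise.}\end{cases}
\]
With the same fractional assignment $x_{v,\ell} := |L_v\cap P_\ell|/|L_v|$ as before, the telescoping sum of \Cref{eq:initialpotential} carries through almost verbatim and the total fractional edge cost equals exactly $\Phi$. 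Initially $\Phi\leq W := \sum_v w_v$, since $|L_v|\geq d(v)+1$.

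I would then apply the perturbation step and \Cref{crl:labeling} of the proof of \Cref{thm:CONGEST-coloring} unchanged (with the same parameters $Q = O(KH)$ and $\eps = 1/(2H)$); the corollary applied with the new $c$ guarantees that each level multiplies $\Phi$ by at most $(1+1/H)$. After all $H$ levels, $\Phi \leq (1+1/H)^H \cdot W < 3W$, each part is a single color, and $d(v)$ now counts only the neighbors sharing $v$'s assigned color. Let $V_L := \{v : d(v)\leq 4\}$. Since $\sum_{v\notin V_L} 5 w_v \leq \Phi \leq 3W$, the nodes in $V_L$ carry a constant fraction of $W$. Within each monochromatic color class, the subgraph induced on $V_L$ has maximum degree $4$, so I would invoke the $O(\Delta)$-approximation for maximum-weight independent set of \cite{KawarabayashiKS20} on each such subgraph; as noted in the proof of \Cref{lemma:weightedcoloringLOCAL}, this runs in $O(\log^*\Delta)$ rounds given the provided $O(\Delta^2)$-coloring and returns an independent set of weight $\Omega(W)$, which we color permanently.

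The main obstacle will be verifying that every step fits into $O(\log\calC + \log n)$-bit messages. Node weights themselves never need to be transmitted, since by hypothesis every node already knows the weights of its neighbors; recomputing the edge cost $c$ at the start of each level only requires each node to learn $|L_u\cap P_\ell|$ from its neighbors, which fits in $O(\log\calC)$ bits; the rounding step of \Cref{crl:labeling} uses the same message budget as in \Cref{thm:CONGEST-coloring}; and the weighted MIS approximation on a degree-$4$ subgraph only needs local weight comparisons, which fit in $O(\log n)$ bits. The round complexity is then the same as in \Cref{thm:CONGEST-coloring}: each of the $H$ levels costs $O(\log\calC \cdot \log\log\calC)$ rounds, totaling $O(\log^2\calC)$, and the final MIS step contributes only $O(\log^*\Delta) = O(\log\calC)$.
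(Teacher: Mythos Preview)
Your proposal is correct and follows essentially the same approach as the paper: you introduce the weighted potential $\Phi=\sum_v w_v\,d(v)/|L_v|$, replace the edge cost in the color-space partitioning of \Cref{thm:CONGEST-coloring} by its weighted analogue, run the same $H$-level rounding via \Cref{crl:labeling}, and finish with the weighted independent set approximation of \cite{KawarabayashiKS20} on the bounded-degree monochromatic subgraphs, exactly as in the paper's proof. Your explicit verification of the message-size budget (in particular that node weights need not be retransmitted) is slightly more detailed than the paper's, but the argument is the same.
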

\begin{proof}
  Note again that as in \Cref{lemma:weightedcoloringLOCAL}, it is sufficient to show that we can compute a partial coloring for which the total weight of the colored nodes is an arbitrary constant fraction of the total weight of all the nodes. The proof is in large part analogous to the proof of \Cref{thm:CONGEST-coloring}, we therefore concentrate on the differences to the proof of \Cref{thm:CONGEST-coloring}.
  
  As in the proof of \Cref{thm:CONGEST-coloring}, we define $K=\lfloor\sqrt{\log\calC}\rfloor$ and we define a hierarchical partitioning of the color space of size $\calC$ into $H=\log_K C=O(\log\calC / \log\log\calC)$ levels. On each level, we partition the color subspace of the part into $K$ approximately equal parts so that at the end of $H=\log_K\calC$ levels, each part consists of a single color. The algorithm then consists of $H$ iterations, where in each iteration, every node $v\in V$ picks a one of the parts at the current partitioning level. We describe this for the first level in more detail. Assume that the color space $\calC$ is partitioned into $K$ parts $P_1,\dots,P_K$ such that $P_1\cup\dots\cup P_K=\calC$. Each node $v$ now needs to choose one part $\ell_v\in \set{1,\dots,K}$ and after choosing the part, the node updates its color list $L_v$ to $L_v\cap P_{\ell_v}$.
  
  We again define a potential $\Phi(v)$ for each node $v\in V$, but in the weighted case, we define the potential of $v$ as $\Phi(v)=w_v\cdot d(v)/|L_v|$. Because for all $v\in V$, $|L_v|\geq d(v)+1$, the initial total potential is $\sum_{v\in V}\Phi(v)<W$, where $W:=\sum_{v\in V} w_v$. As in \Cref{thm:CONGEST-coloring}, we define a fractional assignment of parts to the nodes by $x_{v,\ell}=\frac{|L_v\cap P_{\ell}}{|L_v|}$ and we define an edge cost function, which in the weighted case is
  \begin{equation*}
    c\big((u, \ell_u),(v, \ell_v)\big) = \begin{cases}
    \frac{w_v}{|L_v \cap P_{\ell}|} +\frac{w_u}{|L_u \cap P_{\ell}|} &\text{if $\ell_u= \ell_v = \ell$}\\
    0 &\text{if $\ell_u\neq \ell_v$}.
  \end{cases}
  \end{equation*}
  As in the proof of \Cref{thm:CONGEST-coloring}, we can observe that with the given fractional assignment and edge costs, the total cost is equal to the sum of the node potentials, i.e.,
  \[
  \sum_{\set{u,v}\in E}\sum_{(\ell_u,\ell_v)\in\calL^2} 
   x_{u,\ell_u}\cdot x_{v,\ell_v}\cdot c\big((u,\ell_u), (v,\ell_v)\big) = 
   \sum_{v\in V} \frac{w_v\cdot d(v)}{|L_v|}.
  \]
  The calculations are analogous to the calculations in \Cref{eq:initialpotential} and are therefore omitted. After rounding the fractional assignment into an integer assignment where node $v$ is assigned part $P_{\ell_v}$, the total cost is equal to
  \begin{equation*}
    \sum_{\set{u,v}\in E} \1_{\ell_u=\ell_v}\cdot \left(
    \frac{w_u}{|L_u\cap P_u|} + \frac{w_v}{|L_v\cap P_v|}
    \right) =
    \sum_{u\in V}\sum_{v\in N(u)}\frac{\1_{\ell_u=\ell_v}\cdot w_u}{|L_u\cap P_{\ell_u}|} = \sum_{u\in V}\frac{w_u\cdot d'(u)}{|L_u'|},
  \end{equation*}
  where $d'(u)$ is the number of neighbors of $u$ that choose the same part and $L_u'$ is the new list of $u$. As in the unweighted case, the cost is therefore equal to the potential function for the new problem with reduced color space and lists.
  
  The rounding from a fractional to an integral assignment now works in exactly the same was as in the proof of \Cref{thm:CONGEST-coloring} by applying \Cref{crl:labeling}. After going through all the $O(\log\calC/\log\log\calC)$ levels of the partitioning, we end up with an assignment of a single color to each node $v$ such that the total potential is at most $3$ times the initial potential and thus less than $3W$. As in \Cref{thm:CONGEST-coloring}, the total round complexity is $O(\log^2\calC)$. For a node $v\in V$, let $d_v$ be the number of neighbors that are assigned the same color. The final potential is then equal to $\sum_{v\in V} w_v\cdot d_v < 3W$. We now have the same condition as in \Cref{eq:roundedweight} (with a slightly different constant) and we can therefore conclude the proof in exactly the same way as in the proof of \Cref{lemma:weightedcoloringLOCAL}. The argument works in the \CONGEST model because the distributed weighted maximum independent algorithm of \cite{KawarabayashiKS20} works in the \CONGEST model.
\end{proof}

\subsection{Coloring Layered Graphs}

\begin{lemma}\label{lemma:LayeredColoringLOCAL}
Consider a layered graph with $h$ layers as defined in \Cref{def:layeredGraph}, and an assignment of color lists $L_v$ for each $v\in V$ such that $|L_v|\geq\dup(v)+1$. There is a deterministic distributed algorithm in the \LOCAL model that solves the the corresponding list coloring problem in $O(\log^2\Dup\cdot \log n + h\cdot \log^3\Dup)$ rounds. 
\end{lemma}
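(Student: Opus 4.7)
The plan is to reduce the total work from the naive $O(h\cdot\log^2\Dup\cdot\log n)$ rounds (obtained by running \Cref{thm:LOCAL-coloring} layer by layer from top to bottom) by using the weighted coloring primitive of \Cref{lemma:weightedcoloringLOCAL} globally across all layers. The weights are chosen so that a single phase, costing $O(\log^2\Dup)$ rounds, reduces a potential whose logarithm is $O(\log n+h\log\Dup)$, giving the target complexity.

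I would assign to every node $v\in V_i$ the weight $w_v:=(\Dup+1)^{h-i}$, so that nodes in lower layers carry exponentially more weight; the initial total weight satisfies $W_0\leq n\cdot(\Dup+1)^h$, and thus only $O(\log W_0)=O(\log n+h\log\Dup)$ halvings of the total uncolored weight are required to color every node. To implement a single halving phase, I would reuse the fractional assignment from the proof of \Cref{thm:LOCAL-coloring}, placing mass $x_{v,\ell}=1/\hat{d}(v)$ on a subset $\hat{L}_v\subseteq L_v$ of size $\hat{d}(v):=2^{\lfloor\log_2(\dup(v)+1)\rfloor}$ (allowed since $|L_v|\geq\dup(v)+1$), and use the edge cost $c'((u,\ell_u),(v,\ell_v)):=(w_u+w_v)\cdot\1_{\ell_u=\ell_v}$. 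The key calculation, analogous to \Cref{eq:initialweight} but exploiting the layered structure rather than the full degree, is that attributing each edge $\{u,v\}$ with $u\in V_i$, $v\in V_j$, $i\leq j$, to its lower-layer, higher-weight endpoint $u$ yields
\[
\sum_{u\in V} w_u\cdot c_u \;\leq\; \sum_{u\in V}\frac{2\,w_u\cdot\dup(u)}{\hat{d}(u)}\;\leq\;4W,
\]
since $\hat{d}(u)=\Theta(\dup(u))$ by construction. Applying \Cref{crl:labeling} with $Q=\Theta(\Dup)$ and $\eps=\tfrac13$ then rounds this to an integral assignment of total weighted monochromatic cost $O(W)$ in $O(\log^2\Dup)$ rounds, after which the bounded-monochromatic-degree argument and weighted MIS step from \Cref{lemma:weightedcoloringLOCAL} (using the weighted independent set subroutine of \cite{KawarabayashiKS20} on top of the initial $O(\Delta^2)$-coloring) commit valid colors at a set of nodes of total weight $\Omega(W)$.

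The main obstacle to making this iterate cleanly across phases is preserving the list-coloring validity $|L_v|\geq\dup^{\textrm{unc}}(v)+1$ for every uncolored $v$, where $\dup^{\textrm{unc}}(v)$ is the number of uncolored same-or-higher-layer neighbors of $v$. The invariant drops consistently whenever a same-or-higher-layer neighbor of $v$ gets colored (both sides shrink by at most one), but could in principle fail when a strictly-lower-layer neighbor $u$ of $v$ commits to some color $\ell\in L_v$. I would handle this by augmenting the conflict graph used by the weighted MIS step with a ``list-threat'' edge between every pair $(u,v)$ such that $u$ lies in a strictly lower layer than $v$, $v$ is still uncolored, and $u$'s tentative rounded color $\ell_u$ still lies in $L_v$. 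Each such threat edge corresponds to a fractional mass $x_{u,\ell_u}\cdot x_{v,\ell_u}$ already accounted for in the $O(W)$ cost bound above (since $\ell_u\in\hat{L}_v$ whenever it appears in the support of $x_{v,\cdot}$), so the augmented weighted degree is $O(1)$ on a constant-weight fraction of nodes by the same averaging argument; the weighted MIS therefore still commits $\Omega(W)$ weight while preserving the invariant. Iterating this for $O(\log n+h\log\Dup)$ phases of $O(\log^2\Dup)$ rounds each gives the stated round complexity of $O(\log^2\Dup\cdot\log n+h\cdot\log^3\Dup)$.
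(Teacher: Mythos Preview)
Your overall strategy---assign exponentially graded weights and repeatedly apply \Cref{lemma:weightedcoloringLOCAL} to halve the remaining weight---matches the spirit of the paper's proof, and the cost bound $C'\leq 4W$ via lower-endpoint attribution is correct. However, the ``list-threat'' fix has a genuine gap.

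Adding threat edges to the conflict graph and taking an MIS does not solve the invariant problem at all. If there is a threat edge from a lower-layer node $u$ to a higher-layer uncolored node $v$ (because $\ell_u\in L_v$), an MIS only guarantees that not both are selected. But the bad case is precisely when $u$ alone is selected: $u$ commits to $\ell_u$, $v$ stays uncolored, $|L_v|$ drops by one while $\dup^{\textrm{unc}}(v)$ is unchanged, and the invariant $|L_v|\geq\dup^{\textrm{unc}}(v)+1$ can fail in the next phase. Separately, your accounting claim is also off: the fractional mass $x_{u,\ell_u}\cdot x_{v,\ell_u}$ is nonzero only when $\ell_u\in\hat L_v$, not merely $\ell_u\in L_v$; and after rounding, an edge contributes cost only if $\ell_v=\ell_u$, so the $O(W)$ bound says nothing about non-monochromatic threats.

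The paper sidesteps this entirely by restricting each phase to \emph{free} nodes---those whose higher-layer neighbors are all already colored. In the induced subgraph on free uncolored nodes every edge lies within a single layer, so it is a genuine $(\mathit{degree}+1)$-list instance with maximum degree $\leq\Dup$; and a down-neighbor $u$ of $v$ cannot become free until $v$ is colored, so the list-threat situation never arises. To ensure free nodes carry a constant fraction of the weight, the paper uses a different weighting: orient cross-layer edges upward and, for every maximal oriented path, add weight $2^k-1$ to its $k$th node. The terminal node of each such path is free and carries at least half of that path's total contribution, so free nodes always hold at least half the remaining weight; the total weight is at most $n\cdot\Dup^h\cdot 2^h$, giving the same $O(\log n+h\log\Dup)$ phase count you aimed for. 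Note that your weights $w_v=(\Dup+1)^{h-i}$ would not work with the free-node restriction: in the first phase only layer $h$ is free, and its share of $W$ can be arbitrarily small.
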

\begin{proof} 
Consider the layered graph $G=(V,E)$  with $h$-layers $V_1,\dots,V_h$. We define a weight for each node. For that purpose, let us orient each edge between two different layers $V_i$ and $V_j$, for $j\geq i+1$, toward the higher layer $V_j$. Edges whose both endpoints are in the same layer $V_i$, for some $i$, are left unoriented. Notice that each oriented path in the graph has length at most $h$. For each maximal oriented path, we add a weight of $2^{k}-1$ to the $k$-node on the oriented path. Notice that these weights can be computed easily in $O(h)$ rounds of \LOCAL model, where in round $i\in [1, h]$, each node in $V_i$ sets its weight equal to the total summation of the weights received from lower layers, and then sends its weight to each neighbor in a higher layer $V_j$ for $j\geq i+1$.

 We color the nodes in iterations. Per iteration, we call a node $v\in V_i$ for some $i\in \set{1,\dots,h}$ \textit{free} if all its neighbors in higher layers $N(v)\cap (V_{i+1}\cup\dots\cup V_{h})$ are already colored. Notice that the free nodes and their remaining lists (i.e., colors in their lists that are not used by already colored neighbors in higher layers $N(v)\cap (V_{i+1}\cup\dots\cup V_{h})$) make a $(\mathit{degree}+1)$-list coloring instance. Moreover, per iteration, free nodes hold at least a constant fraction of the total weight. The reason is that for each maximal oriented path, the last node is a free node. Moreover, since the weights added along this path are exponentially increasing--- i.e., weight a weight of $2^{k}$ to the $k$-node on the oriented path---the last node has received at least half of the weight added to all nodes for this path. Hence, by invoking \Cref{lemma:weightedcoloringLOCAL}, we get a coloring that colors a subset of the free nodes that hold at least $1/4$ of the the total weight. Hence, per iteration, the total weight reduces by a factor of $3/4$. 
 
 We can bound the total summation of the weights by $n\Dup^h \cdot 2^{h}$ because each of the $n$ nodes is the source of at most $\Dup^{h}$ distinct oriented paths, and each oriented path has a total of at most $\sum_{k=1}^{h} 2^{k}-1 \leq 2^{h}$ weight. Since per iteration the total weight reduces by a factor of $3/4$, the algorithm terminates in $O(\log^2\Dup\cdot \log n + h\cdot \log^3\Dup)$ rounds. 
\end{proof}

\begin{lemma}\label{lemma:LayeredColoringCONGEST}
Consider a layered graph as defined in \Cref{def:layeredGraph}, and an assignment of color lists $L_v$ for each $v\in V$ such that $|L_v|\geq\dup(v)+1$. There is a deterministic distributed algorithm in the \CONGEST that solves the the corresponding list coloring problem in $O(\log^2\mathcal{C}\cdot \log n + h\cdot \log^2{\mathcal{C}}\cdot\log\Dup)$ rounds, using messages of size at most $O(\log\calC + \log n)$ bits.
\end{lemma}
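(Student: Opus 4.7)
The plan is to follow the proof of \Cref{lemma:LayeredColoringLOCAL} almost verbatim, replacing the weighted coloring subroutine \Cref{lemma:weightedcoloringLOCAL} with its \CONGEST analogue \Cref{lemma:weightedcoloringCONGEST}. As in the \LOCAL case, I would orient every inter-layer edge toward its higher-layer endpoint, define the node-weight function that, for each maximal oriented path and each $k$, adds a contribution of $2^{k}-1$ to the $k$-th node of the path, and then iterate: locally identify the set of \emph{free} nodes (those whose higher-layer neighbors are all already colored), feed them together with their pruned color lists to \Cref{lemma:weightedcoloringCONGEST}, and repeat. The geometric-decrease argument from the \LOCAL proof transfers without change: inside the uncolored subgraph, the top of each maximal oriented path is free and carries more than half of the weight on that path, so the free nodes always carry at least a constant fraction of the remaining weight, and each invocation shrinks the total uncolored weight by a constant factor. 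Since the initial total weight is at most $n\cdot\Dup^{h}\cdot 2^{h}$, there are $O(\log n + h\log\Dup)$ iterations, each costing $O(\log^{2}\calC)$ rounds by \Cref{lemma:weightedcoloringCONGEST}, giving the claimed $O(\log^{2}\calC\cdot\log n + h\cdot\log^{2}\calC\cdot\log\Dup)$ total.

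The only new bookkeeping in the \CONGEST setting is the bounded-bandwidth computation and dissemination of the weights themselves. A node's weight can be as large as $n\cdot(2\Dup)^{h}$, which requires $\Theta(h\log\Dup+\log n)$ bits and in general exceeds the $O(\log\calC+\log n)$-bit message budget. I would compute the weights by a pipelined bottom-up sweep across the $h$ layers: a node $v\in V_{i}$ assembles its weight from those received from its lower-layer in-neighbors and then streams it, $O(\log\calC+\log n)$ bits per round, to each of its higher-layer out-neighbors; a final pipelined exchange with every neighbor supplies the ``each node knows its neighbors' weights'' prerequisite of \Cref{lemma:weightedcoloringCONGEST}. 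Because the sweep itself is pipelined across the $h$ layers, the preprocessing finishes in $O\big(h + (h\log\Dup+\log n)/(\log\calC+\log n)\big) = O(h\log\Dup)$ rounds, which is well within the main bound. Determining free status inside the main loop requires only one bit per neighbor per iteration, which fits trivially.

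The main obstacle is precisely this message-size issue: transporting the \LOCAL argument to \CONGEST without an extra factor in the round complexity, despite node weights being potentially exponential in $h$. The pipelined weight exchange above resolves it while keeping all messages within $O(\log\calC+\log n)$ bits. Every other ingredient---the orientation, the geometric weighting, the concentration of uncolored weight on free nodes, and the constant-factor decrease of uncolored weight per iteration---carries over verbatim from the proof of \Cref{lemma:LayeredColoringLOCAL}.
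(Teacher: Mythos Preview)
Your plan is correct and matches the paper's proof almost exactly: the paper also reduces to \Cref{lemma:LayeredColoringLOCAL}, swaps in \Cref{lemma:weightedcoloringCONGEST} for the per-iteration coloring of free nodes, and singles out the transmission of the (potentially $\Theta(h\log\Dup+\log n)$-bit) node weights as the only genuinely new \CONGEST issue.

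The one real difference is in how that issue is resolved. You keep the weights exact and pipeline their bits across the $h$ layers; this is sound (addition is online in LSB-first order, so each layer adds only $O(1)$ delay and the sweep finishes in $O(h + (h\log\Dup+\log n)/(\log\calC+\log n))=O(h)$ rounds), though you are implicitly relying on that carry-propagation observation. The paper instead \emph{compresses}: during the bottom-up sweep each node rounds its current weight up to the nearest power of $2$ and transmits only the exponent, which fits in $O(\log\log w(v))=O(\log(h\log\Dup+\log n))=O(\log n)$ bits. This rounding is lossy, so the paper has to re-verify two things: (i) along every oriented path the last node still carries at least half of that path's weight, so free nodes still hold a constant fraction of the remaining uncolored weight; and (ii) the total weight blows up by at most an additional factor $2^{h}$, which leaves the $O(\log n + h\log\Dup)$ iteration bound unchanged. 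Your exact-transmission route saves that re-analysis at the cost of the pipelining argument; the paper's rounding route keeps every message to $O(\log n)$ bits in a single round and avoids pipelining altogether. Either choice yields the stated bound.
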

\begin{proof}[Proof Sketch]
The proof is the same as in \Cref{lemma:LayeredColoringLOCAL} except for two changes: (1) when coloring free nodes, we invoked the \CONGEST model algorithm of \Cref{lemma:weightedcoloringCONGEST}, instead of the \LOCAL model algorithm of \Cref{lemma:weightedcoloringLOCAL}. (2) When computing the initial weights, we need slightly more care to ensure that we can compute the weights in $h$ rounds using $O(\log n)$-bit messages. In the \CONGEST model, each node $v$ with weight $w(v)$ sends $2^{\lceil w(v)\rceil}$ to its neighbors in higher layers, using only $O(\log \log w(v))$ bits to encode the exponent of $2$. Notice that the latter is at most $O(\log\log w(v))=O(\log\log (n\Dup^h))=O(\log(h\log \Dup + \log n))=O(\log n)$ bits, as $h\leq n$ and $\Dup \leq n$. Each node still sets its weight equal to the summation of the weights it receives from lower layer nodes. We can easily see that this rounding up procedure is safe, in the sense that (1) still in each oriented path the last node receives at least half of the weight of the path and therefore, free nodes hold at least half of the total weight, (2) the total weight increases by at most a $2^{h}$ factor and thus the number of rounds remains asymptotically as before.
\end{proof}

\subsection{Implications on Coloring Low-Arboricity Graphs and \boldmath$\Delta$-Coloring}
\begin{corollary}\label{crl:ArbColoring}
There are deterministic distributed algorithm that compute a $(2+\eps)a$-list-coloring coloring of any graph with arboricity at most $a$ in $O(\log^3{a}\cdot \log(n)/\eps)$ rounds of the \LOCAL model and in  $O(\log^2\mathcal{C}\cdot \log n + \log^2{\mathcal{C}}\cdot \log a\cdot \log(n)/\eps)$ rounds of the \CONGEST model, using messages of size at most $O(\log\calC + \log n)$ bits.
\end{corollary}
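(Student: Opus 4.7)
The plan is to reduce the given list-coloring instance to one on a layered graph in the sense of \Cref{def:layeredGraph}, and then to invoke \Cref{lemma:LayeredColoringLOCAL} for the \LOCAL bound and \Cref{lemma:LayeredColoringCONGEST} for the \CONGEST bound. The tool for this reduction is the classical H-partition / iterative peeling of Barenboim and Elkin. Since a graph of arboricity at most $a$ has the property that every subgraph has average degree at most $2a$, a Markov-type bound shows that in every subgraph at least an $\Omega(\eps)$-fraction of the vertices have degree at most $(2+\eps/2)a$. Iteratively peeling off such low-degree vertices from the current remaining graph therefore produces an ordered partition $V=V_1\cup\dots\cup V_h$ with $h=O(\log n/\eps)$ layers, such that every $v\in V_i$ has at most $(2+\eps/2)a$ neighbors in $V_i\cup V_{i+1}\cup\dots\cup V_h$. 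Each peeling step consists only of every still-remaining vertex learning its current remaining degree from its surviving neighbors, so the whole H-partition can be computed deterministically in $O(\log n/\eps)$ rounds using $O(\log n)$-bit messages.

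For $a$ larger than a small constant this up-degree bound gives $\dup(v)+1\leq(2+\eps)a$, so the hypothesis $|L_v|\geq \dup(v)+1$ required by the layered coloring lemmas is satisfied by any $(2+\eps)a$-list-coloring input (tiny values of $a$ can be dispatched separately by any naive $O(a\cdot\log n)$-round greedy algorithm). Plugging $\Dup\leq(2+\eps)a$ and $h=O(\log n/\eps)$ into \Cref{lemma:LayeredColoringLOCAL} yields
\[
O\!\left(\log^2 \Dup\cdot \log n + h\cdot \log^3 \Dup\right) = O\!\left(\log^2 a\cdot \log n + \frac{\log^3 a\cdot \log n}{\eps}\right) = O\!\left(\frac{\log^3 a\cdot \log n}{\eps}\right),
\]
which matches the \LOCAL claim. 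Substituting the same parameters into \Cref{lemma:LayeredColoringCONGEST} analogously gives
\[
O\!\left(\log^2 \calC\cdot \log n + \frac{\log^2 \calC\cdot \log a\cdot \log n}{\eps}\right)
\]
rounds with $O(\log\calC+\log n)$-bit messages, matching the \CONGEST claim. The initial proper $O(\Delta^2)$-coloring that the inner lemmas assume can be computed once via Linial's algorithm in $O(\log^* n)$ rounds, and this is absorbed by the stated complexities.

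The only real conceptual step is producing the H-partition with the correct slack between the up-degree bound $(2+\eps/2)a$ and the list-size bound $(2+\eps)a$; once that is in place, the corollary is a direct substitution into the layered-graph coloring lemmas. I do not anticipate a serious obstacle: the interplay between the peeling parameter, the number of layers, and the $+1$ in the $(\mathit{degree}+1)$-list condition is handled simply by running the peeling with parameter $\eps/2$ and absorbing constant-sized arboricities into the trivial case.
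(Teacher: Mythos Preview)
Your proposal is correct and follows essentially the same approach as the paper: compute the Barenboim--Elkin $H$-partition in $O(\log n/\eps)$ rounds to obtain a layered graph with $h=O(\log n/\eps)$ layers and $\Dup=O(a)$, then plug these parameters into \Cref{lemma:LayeredColoringLOCAL} and \Cref{lemma:LayeredColoringCONGEST}. Your write-up is in fact more careful than the paper's terse proof, explicitly handling the $+1$ slack in the $(\mathit{degree}+1)$-list condition (via peeling with parameter $\eps/2$) and the degenerate small-$a$ case.
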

\begin{proof}[Proof of \Cref{crl:ArbColoring}]
First, we compute an $H$-partition of the graph in $O(\log n/\eps)$ rounds, using the standard peeling algorithm  of Barenboim and Elkin~\cite{barenboimelkin_book}. This is a layered graph with $O(\log(n)/\eps)$ layers and with $\Dup=(2+\eps) a$. Then, we invoke \Cref{lemma:LayeredColoringCONGEST}.
\end{proof}

\begin{corollary}\label{crl:DeltaColoring}
There is a deterministic distributed algorithm in the \LOCAL model that, for any $\Delta\geq 3$ and any $\Delta$-colorable $n$-node graph with maximum degree at most $\Delta$, computes a $\Delta$-coloring in $O(\log^2 \Delta \cdot\log^2 n)$ rounds.
\end{corollary}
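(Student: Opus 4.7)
The plan is to reduce $\Delta$-coloring of $G$ to a layered list-coloring instance in the sense of \Cref{def:layeredGraph} with the uniform list $L_v=\{1,\dots,\Delta\}$ at every vertex, and then invoke \Cref{lemma:LayeredColoringLOCAL}. With uniform lists of size $\Delta$, the list-size hypothesis $|L_v|\ge\dup(v)+1$ becomes $\dup(v)\le\Delta-1$, so the goal is to build a layering in which every node has at most $\Delta-1$ neighbors in its own or a higher layer.

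First, I would preprocess each connected component so that it contains a "seed" vertex of reduced degree less than $\Delta$. A component that is not $\Delta$-regular already has such a vertex, so it can serve as the seed directly. For a $\Delta$-regular component, the fact that $\Delta\ge 3$ combined with $\Delta$-colorability and Brooks' theorem implies that the component is not $K_{\Delta+1}$, hence contains two non-adjacent vertices $u,w$ sharing a common neighbor $v$. Since a vertex can verify in $O(1)$ rounds whether it participates in such a Brooks triple, we can identify one triple per $\Delta$-regular component and precolor both $u$ and $w$ with the same color. Removing $u,w$ and shrinking the lists of their neighbors by one color leaves a seed $v$ of reduced degree $\Delta-2$ and list size $\Delta-1$. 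A short case check shows that every other remaining vertex $x$ still satisfies the required inequality $|L_x|\ge\dup(x)+1$ in the BFS layering rooted at $v$: each neighbor deleted from $x$ shrinks its degree by at least as much as it shrinks its list, and being rooted via BFS contributes an extra slack of one.

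Second, I would build a BFS forest rooted at the seeds and set layer $i$ to be the set of vertices at BFS distance $i$ from their seed. Every non-seed vertex has at least one neighbor in a strictly lower layer, giving $\dup(v)\le d(v)-1\le\Delta-1$, while each seed has up-degree at most $\Delta-1$ by construction. The resulting instance is a valid input for the layered coloring algorithm.

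The main obstacle is that the BFS depth $h$ can be as large as $\Omega(n)$, so a direct call to \Cref{lemma:LayeredColoringLOCAL} (with cost $O(\log^2\Delta\log n+h\log^3\Delta)$) is too expensive. I would address this by wrapping the layered algorithm in $O(\log n)$ outer iterations, each of which colors a constant fraction of the still-uncolored vertices in $O(\log^2\Delta\cdot \log n)$ rounds. Concretely, in each iteration I would invoke the weighted partial-coloring primitive of \Cref{lemma:weightedcoloringLOCAL}, using geometrically increasing weights along BFS paths as in the proof of \Cref{lemma:LayeredColoringLOCAL}, but truncated to the currently deepest $O(\log n)$ uncolored BFS layers so that the total weight is polynomially bounded. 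The exponential-weight argument from that proof then guarantees that a constant fraction of the uncolored mass is concentrated on "free" nodes at the frontier and gets colored per iteration, and after $O(\log n)$ iterations the whole graph is colored. Multiplying the per-iteration cost $O(\log^2\Delta\log n)$ by $O(\log n)$ iterations yields the claimed $O(\log^2\Delta\cdot\log^2 n)$-round bound.
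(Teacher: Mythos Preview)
Your reduction to a layered instance is on the right track, but the argument breaks at the step where you claim that $O(\log n)$ outer iterations suffice. The layered coloring proceeds strictly top-down: a vertex in layer $i$ becomes \emph{free} only after all of its neighbors in layers $i+1,\dots,h$ are colored. If you BFS from a single seed per component, the number of layers $h$ equals the eccentricity of the seed, which can be $\Omega(n/\Delta)$ (take, e.g., the $(\Delta/2)$-th power of an $n$-cycle). On such a graph, each iteration can color at most the currently topmost uncolored layer, so $\Omega(h)$ iterations are unavoidable regardless of how you set weights. The exponential-weight trick in \Cref{lemma:LayeredColoringLOCAL} does not beat this: it shows that a constant fraction of the \emph{weight} is colored per iteration, but the total weight is $n\cdot\Dup^{h}\cdot 2^{h}$, so the number of iterations is $\Theta(\log n + h\log\Dup)$, not $O(\log n)$. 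Truncating to the deepest $O(\log n)$ layers does not help either; it just processes the layers in batches and still pays $\Omega(h)$ overall. There is also a preliminary issue: selecting \emph{one} Brooks triple per $\Delta$-regular component and computing BFS distances from that single seed already costs $\Omega(\text{diameter})$ rounds in \LOCAL.

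The paper avoids exactly this obstacle by planting \emph{many} seeds. It first computes a $(2R+1,O(R\log n))$-ruling set $S$ with $R=O(\log_\Delta n)$, using the algorithm of \cite{awerbuch89,SEW13}, so that every vertex is within distance $O(\log_\Delta n\cdot\log n)$ of some seed. Layering by distance to $S$ then gives $h=O(\log_\Delta n\cdot\log n)$ layers with $\dup(v)\le\Delta-1$ for all $v\in V\setminus S$, and a direct call to \Cref{lemma:LayeredColoringLOCAL} yields the claimed $O(\log^2\Delta\cdot\log^2 n)$ bound for coloring $V\setminus S$. The seeds themselves are left uncolored, but since any two are at distance $>2R$, each can be fixed independently in $R=O(\log_\Delta n)$ additional rounds via the augmenting-path recoloring of \cite{panconesi95delta,GhaffariHKM18}. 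The ruling-set step is the missing idea in your plan.
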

\begin{proof}
  Let $G=(V,E)$ be a graph with maximum degree $\Delta\geq 3$ and assume that $G$ is not a complete graph (i.e., $G$ is $\Delta$-colorable).
  We use the same basic algorithm as in \cite{panconesi95delta,GhaffariHKM18}. In \cite{panconesi95delta,GhaffariHKM18}, for $\Delta\geq 3$, it is shown that if in a $\Delta$-colorable $n$-graph $G$ all, except one node $v$ are properly colored with colors from $\set{1,\dots,\Delta}$, then one can compute a $\Delta$-coloring of all nodes of $G$ by only changing the colors in an $R=O(\log_\Delta n)$-hop neighborhood of $G$. To use this, we first compute a set $S$ of nodes such that any two nodes in $S$ are at distance at least $2R+1$ and such that every node in $G$ is within distance at most $O(R\log n)=O(\log_\Delta n \cdot\log n)$ from a node in $S$. By using a ruling set algorithm from \cite{awerbuch89,SEW13}, such a set $S$ can be computed deterministically in $O(R\log n)=O(\log_\Delta n \cdot\log n)$ rounds in the \LOCAL model. The goal then is to compute a partial $\Delta$-coloring of $G$ such that all nodes except the nodes in $S$ are colored. Once this is done, because any two uncolored nodes are at least $2R+1$ hops apart, we can then convert the partial coloring into a full $\Delta$-coloring of $G$ in $R=O(\log_\Delta n)$ rounds. It therefore remains to show that we can compute the partial coloring of all nodes except the nodes in $S$ in $O(\log^2 \Delta \cdot\log^2 n)$ rounds.
  
  For this, we define a layered graph with layers $1,\dots,O(R\log n)$ as follows. The nodes in layer $V_i$ are all nodes for which the nearest node in $S$ is at distance exactly $i$. Note that the layered graph contains all nodes in $V\setminus S$. Because every node $v$ in layer $i\geq 2$ has at least one neighbor in layer $i-1$ and because every node in layer $1$ has a neighbor in $S$, for each node $v$, we have $\dup(v)\leq \Delta-1$. If we assign the color list $L_v=\set{1,\dots,\Delta}$ to each node $v$, we can use \Cref{lemma:LayeredColoringLOCAL} to compute the partial coloring of the nodes in $V\setminus S$ in time $O(\log^2\Delta\cdot \log n + R\log n\cdot \log^3\Delta)=O(\log^2\Delta\cdot\log^2 n)$.
\end{proof}

\section{The Algorithm for Weighted Average Defective Coloring}
\label{sec:defective_algorithms}

In this section, we give a distributed algorithm to solve the weighted average defective coloring problem and we prove \Cref{lemma:recursiveavgdefective}. 

\subsection{Basic Weighted Average Defective Coloring Algorithms}

We start with the weighted version of a standard distributed defective coloring result.
Recall that for a parameter $\eps>0$ and an integer $C\geq 1$, a weighted $\eps$-relative defective $C$-coloring of a weighted graph $G=(V,E,w)$ is a $C$-coloring of the nodes $V$ such that for all nodes $v\in V$, the total weight of $v$'s monochromatic edges is at most an $\eps$-fraction of the total weight of the total weight of $v$'s edges (cf.\ \Cref{def:defcoloring}). By adapting an algorithm and analysis of  \cite{Kuhn2009WeakColoring}, Kawarabayashi and Schwartzman in \cite{KawarabayashiS18} show how to efficiently compute a weighted $\eps$-relative defective coloring with $O(1/\eps^2)$ colors.

\begin{lemma}[Weighted Defective Coloring]\label{lemma:defcoloring}\cite{KawarabayashiS18}
  Let $G=(V,E,w)$ be a weighted graph with non-negative edge weights, let $\eps\in(0,1]$ be a parameter, and assume that the nodes of $G$ are properly colored with colors in $[q]$. Then, there is a distributed $O(\log^*q)$-round algorithm to compute a weighted $\eps$-relative $O(1/\eps^2)$-coloring of $G$. If initially, every node knows the edge weights of all incident edges, the algorithm requires messages of $O(\log q)$ bits.
\end{lemma}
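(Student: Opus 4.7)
The plan is to adapt the iterated-reduction algorithm of Kuhn~\cite{Kuhn2009WeakColoring} for defective coloring to handle edge weights, following the approach of Kawarabayashi and Schwartzman~\cite{KawarabayashiS18}. Starting from the given proper $q$-coloring, the algorithm will iteratively shrink the color space in $O(\log^* q)$ rounds using a precomputed family of hash-like functions, accumulating a controlled amount of weighted defect per round.

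The central building block will be a one-round reduction that turns a coloring with $k$ colors into a coloring with $O(\log k /\eps'^2)$ colors while adding at most $\eps'\cdot W_u$ to the weighted defect at every node $u$, where $W_u := \sum_{v\in N(u)} w(u,v)$. For this, I would fix (once, offline) a family $\mathcal{F} = \{f_1,\dots,f_m\}$ of functions $f_i:[k]\to[C]$ with $m = O(\log k)$ and $C = O(1/\eps'^2)$, designed so that for every weighted distribution on $[k]$ and every own-color $c \in [k]$, at least one $f_i$ sends at most an $\eps'$-fraction of the total weight into the bucket $f_i(c)$. Existence of such a universal family follows from the probabilistic method: for a uniformly random $f_i$, the collision weight at a fixed node has expectation $W_u/C$ and variance bounded by $W_u^2/C$, so Chebyshev's inequality gives constant failure probability when $C = \Theta(1/\eps'^2)$; taking $m$ independent functions pushes the failure probability at any fixed distribution type down to $2^{-m}$, and a union bound over a suitably fine discretization of the (finite-dimensional) distribution space yields the claimed family with $m = O(\log k)$. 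Each node $u$ will then spend one round learning its neighbors' current colors and incident edge weights via $O(\log k)$-bit messages, pick the smallest index $i_u$ for which $f_{i_u}$ is good for its own weighted neighborhood, and output the new color $(i_u, f_{i_u}(c(u)))\in[m]\times[C]$. The only collisions at $u$ come from neighbors with the same index $i_v=i_u$, whose total weight is at most $\eps' W_u$ by the choice of $i_u$.

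To achieve the $O(\log^* q)$ round complexity, I would iterate this reduction with a geometrically decreasing per-round parameter $\eps_t = \eps / 2^{r-t+1}$ over $r = O(\log^* q)$ rounds. The color count telescopes $q \to O(\log q/\eps_1^2) \to O(\log\log q/\eps_2^2) \to \cdots \to O(1/\eps^2)$ because $\log^{(r)}q = O(1)$, while the total accumulated defect is $\sum_t \eps_t W_u \leq \eps W_u$. The hardest part of the argument will be handling the fact that after the first iteration the intermediate coloring is no longer proper, so the one-round gadget must operate on already-defective inputs. This turns out to be essentially free: the probabilistic analysis of $\mathcal{F}$ only uses that the weighted neighborhood induces a distribution on the current color space, and never requires the neighbor colors to differ from $c(u)$. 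Monochromatic edges inherited from previous rounds are already counted against the defect budget and do not interfere with the per-round guarantee. The message-size claim is immediate because every message carries only the current color, which fits in $O(\log q)$ bits throughout (the color space only shrinks).
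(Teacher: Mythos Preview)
The paper does not give its own proof of this lemma; it is quoted as a known result of Kawarabayashi and Schwartzman~\cite{KawarabayashiS18}, who observed that Kuhn's unweighted defective-coloring algorithm~\cite{Kuhn2009WeakColoring} carries over verbatim to edge weights. Your high-level plan---iterate a one-round color-space reduction $O(\log^* q)$ times while budgeting a geometric series of per-round defects---is exactly that approach.

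There is, however, a genuine gap in your existence argument for the family $\mathcal{F}$. The Chebyshev step for a \emph{fixed} weighted neighborhood is fine, but the union bound over ``a suitably fine discretization of the distribution space'' cannot yield $m=O(\log k)$: the simplex of probability distributions on $[k]$ is $(k{-}1)$-dimensional, and any $\delta$-net of it has size $(\Theta(1)/\delta)^{k-1}$, which would force $m=\Omega(k)$ rather than $m=O(\log k)$. The argument that actually works in~\cite{Kuhn2009WeakColoring,KawarabayashiS18} is both simpler and avoids this entirely. One only needs the \emph{pairwise} property that for every two distinct old colors $c\neq c'$, at most an $\eps'$-fraction of the functions $f_i$ satisfy $f_i(c)=f_i(c')$. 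Given this, for \emph{any} weighted neighborhood the average over $i\in[m]$ of the collision weight $\sum_{v} w(u,v)\,\1[f_i(c(v))=f_i(c(u))]$ is at most $\eps' W_u$ by linearity, so the minimum over $i$ is too---no union bound over distributions is needed at all. This pairwise property is precisely what the explicit polynomial construction supplies: identify $[k]$ with degree-$<d$ polynomials over $\mathbb{F}_p$ and let $f_\alpha$ be evaluation at $\alpha\in\mathbb{F}_p$; two distinct polynomials agree on at most $d{-}1$ of the $p$ evaluation points. Replacing your probabilistic step with this averaging-over-pairs argument (or the explicit algebraic construction) closes the gap and gives the parameters the iteration needs.
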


Note that we could directly use \Cref{lemma:defcoloring} instead of \Cref{lemma:recursiveavgdefective} in all our distributed vertex coloring algorithms. This would however lead to coloring algorithms that are slower by two log-factors. In order to get a better trade-off between the defect and the number of colors of the defective coloring, we need to resort to the weaker weighted average defective coloring. Recall that for a parameter $\eps>0$ and an integer $C\geq 1$, a weighted $\eps$-relative average defective $C$-coloring of a weighted graph $G=(V,E,w)$ is a $C$-coloring of the nodes $V$ such that the total weight of all monochromatic edges is at most an $\eps$-factor of the total weight of all the edges of $G$. We first give a simple iterative algorithm to compute a weighted average defective coloring.

\begin{lemma}\label{lemma:simpleavgdefective}
  Let $G=(V,E,w)$ be a weighted graph with non-negative edge weights and assume that we are given a proper vertex coloring with colors in $[q]$ of $G$. Then, for every integer $C\geq 1$, there is a deterministic $O(q)$-round algorithm to compute a weighted average $1/C$-relative defective $C$-coloring of $G$. The algorithm requires messages of size $O(\log C)$.
\end{lemma}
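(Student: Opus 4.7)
The plan is to compute the new $C$-coloring by processing the $q$ initial color classes one at a time, spending one round per class. In round $i\in\{1,\dots,q\}$, every node $v$ whose initial color is $i$ locally picks a new color $\varphi(v)\in[C]$ and broadcasts it to all neighbors; by round $q$ every node has been processed exactly once. Because initial color classes are independent sets in $G$, the decisions within a single round cannot influence each other, so the algorithm is well-defined and clearly runs in $O(q)$ rounds using messages of $O(\log C)$ bits (a single color identifier per neighbor).

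The local rule for node $v$ of initial color $i$ is greedy: among the already-processed neighbors (those with initial colors in $\{1,\dots,i-1\}$, whose new colors have been received in previous rounds), $v$ picks the color $c\in[C]$ minimizing the total weight
\[
M_v(c):=\sum_{u\in N(v),\ \text{initial color}(u)<i,\ \varphi(u)=c} w(u,v).
\]
Let $W_v^{\mathrm{past}}:=\sum_{u\in N(v),\ \text{initial color}(u)<i} w(u,v)$ be the total weight from $v$ to its already-processed neighbors. A simple averaging argument over the $C$ possible new colors yields $\min_{c\in[C]} M_v(c)\le W_v^{\mathrm{past}}/C$.

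For the defect bound, observe that every monochromatic edge $\{u,v\}$ is counted exactly once, namely at its \emph{later}-processed endpoint, when we sum the local monochromatic weights $M_v(\varphi(v))$ over all $v\in V$. Consequently the total weight of monochromatic edges equals $\sum_{v\in V} M_v(\varphi(v))\le \sum_{v\in V} W_v^{\mathrm{past}}/C$. For the same reason, $\sum_{v\in V} W_v^{\mathrm{past}}=\sum_{e\in E} w(e)$, since each edge is counted once at its later endpoint. Therefore the total monochromatic weight is at most $\frac{1}{C}\sum_{e\in E}w(e)$, which is exactly the average $(1/C)$-relative defect required by \Cref{def:avgdefcoloring}.

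There is no real obstacle here; the only subtlety is the bookkeeping around counting each edge exactly once at its later-processed endpoint, both for the local averaging step and for the global summation. The message-size bound is immediate because each node only ever needs to transmit its chosen color from $[C]$, and by assumption it already knows the weights of its incident edges, so no edge weights need to be communicated.
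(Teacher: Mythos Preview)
Your proof is correct and essentially identical to the paper's own argument: both iterate through the $q$ initial color classes, have each node greedily pick the color in $[C]$ minimizing the monochromatic weight toward its already-processed neighbors, and use averaging plus the observation that each edge is counted exactly once (at its later-processed endpoint) to obtain the $1/C$ bound. The paper phrases this via an edge orientation from larger to smaller initial color, but the content is the same.
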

\begin{proof}
  The algorithm consists of $q$ phases, where in phase $i\in [q]$, the nodes with color $i$ (in the initial proper $q$-coloring) choose their color of the defective coloring. For the sake of the analysis, assume that all edges in the graph are oriented from the node with the larger initial color to the node with the smaller initial color. Note that because the initial $q$-coloring is a proper coloring, the orientation of every edge is well-defined. Consider some node $v\in V$ that chooses its color in $[C]$ in phase $i$. When $v$ chooses its color, all outgoing neighbors of $v$ have already chosen their color in previous phases (because they have a smaller initial color). Node $v$ chooses a color in $[C]$ such that the total weight of its monochromatic outgoing edges is minimized. Because $v$ can choose among $C$ different colors, there exists a color such that the total weight of the monochromatic outgoing edges is at most a $1/C$-fraction of the total weight of all outgoing edges. This directly implies that the computed coloring is weighted average $1/C$-relative defective. Clearly, one phase of the algorithm can be implemented in a single communication round and in order to choose its color, a node only needs to know the colors (of the defective coloring) of the already processed neighbors. Thus, the algorithm only requires messages of $O(\log C)$ bits.
\end{proof}

\subsection{Time-Efficient Weighted Average Defective Coloring Algorithms}

As a next step, we give a recursive algorithm that achieves a similar trade-off between relative average defect and number of colors more efficiently.

\begin{lemma}\label{lemma:simplerecursivedefective}
  Let $G=(V,E,w)$ be a weighted graph with non-negative edge weights and assume that we are given a proper vertex coloring with colors in $[q]$ of $G$. Then, for every $\eps\geq 1$, there is a deterministic $O(1/\eps + \log(1/\eps)\cdot\log^*q)$-round algorithm to compute a weighted  $\eps$-relative average defective $\lceil 2/\eps\rceil$-coloring of $G$. The algorithm requires messages of size $O(\log q)$.
\end{lemma}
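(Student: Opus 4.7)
My plan is to build this algorithm recursively, combining Lemma~\ref{lemma:simpleavgdefective} (which gives the correct color/defect trade-off but has an $O(q)$ dependence) with Lemma~\ref{lemma:defcoloring} (which runs in $O(\log^* q)$ rounds but produces too many colors to be used directly as a black box). The intuition is that the simple algorithm becomes cheap as soon as the effective number of initial proper colors has been pushed down to $O(1/\eps)$, so the strategy is to use fast weighted defective colorings to shrink the ``effective'' initial color count in $O(\log^* q)$ rounds per stage and then finish with one application of the simple algorithm.

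Concretely, I would handle the base case first: if $q=O(1/\eps)$, then Lemma~\ref{lemma:simpleavgdefective} with $C=\lceil 2/\eps\rceil$ directly produces the desired $\lceil 2/\eps\rceil$-coloring in $O(q)=O(1/\eps)$ rounds with average defect at most $1/C\leq \eps/2$. For the recursive case, I would spend a small slice of the defect budget, roughly $\eps/(4\log(1/\eps))$, on a single invocation of Lemma~\ref{lemma:defcoloring}: this carves the vertex set into $O(1)$ weighted-defective groups in $O(\log^* q)$ rounds and "discards" a weighted fraction of at most $\eps/(4\log(1/\eps))$ of the edges by charging them to the final defect. Within each group, the original proper $q$-coloring restricts to a proper coloring of the induced subgraph, so I can recurse in parallel inside each group, asking each recursive call to deliver a coloring with a proportionally smaller number of classes. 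The final coloring of $G$ is obtained by tupling the outer group index with the recursively produced inner color. If the recursion has depth $k=\Theta(\log(1/\eps))$, the cost telescopes to $O(k\cdot \log^* q)+O(1/\eps)=O(1/\eps+\log(1/\eps)\cdot\log^* q)$, matching the target.

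The main obstacle I anticipate is tuning the parameters so that three quantities line up simultaneously: the product of per-level class counts must equal $\lceil 2/\eps\rceil$, the sum of per-level defects must stay within $\eps$, and the per-level cost must aggregate to the claimed bound. Since Lemma~\ref{lemma:defcoloring} produces $O(1/\delta^2)$ classes for relative defect $\delta$, the class count per recursive step is a specific function of the defect spent at that step; making sure the final number of output colors is exactly $\lceil 2/\eps\rceil$ (and not merely $\poly(1/\eps)$) will require either a careful choice of constant-size splits per level, or a final ``compression'' step after the recursion in which the simple algorithm is invoked once more on the proper coloring that the recursion induces among the $O(1)$ surviving groups at the finest level. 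A secondary technical point is the $O(\log q)$ message size: the ingredient lemmas already use messages of this size, and the tuple of per-level class indices also fits in $O(\log q)$ bits, so no additional blow-up arises — but this has to be verified together with the bookkeeping for how each node learns its inner group label at each recursive level.
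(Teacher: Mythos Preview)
Your high-level plan---recursively combine Lemma~\ref{lemma:defcoloring} with Lemma~\ref{lemma:simpleavgdefective} over $\Theta(\log(1/\eps))$ levels---matches the paper, but your accounting has a concrete gap. You claim that spending defect $\eps/(4\log(1/\eps))$ at each level carves $V$ into $O(1)$ groups; but Lemma~\ref{lemma:defcoloring} produces $O(1/\delta^2)$ colors for relative defect $\delta$, so that budget yields $\Theta\big((\log(1/\eps)/\eps)^2\big)$ groups, not $O(1)$. If instead you use a constant defect (say $1/4$) to truly get $s=O(1)$ groups per level and tuple the group indices over $k=\Theta(\log(1/\eps))$ levels, the resulting color count $s^k$ is $\poly(1/\eps)$ with an exponent governed by the hidden constant $s$; your single ``final compression'' via Lemma~\ref{lemma:simpleavgdefective} then costs $s^k$ rounds, which is in general $\omega(1/\eps)$ and misses the target bound. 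Neither of the two fixes you anticipate (careful constant-size splits, or one compression at the end) resolves this mismatch.

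The paper closes the gap by applying the compression step at \emph{every} level of the recursion rather than once at the end, and by recursing with a \emph{relaxed} defect parameter. Concretely, at parameter $\eps$: split into $s=O(1)$ groups using a fixed constant defect $1/4$, so the induced subgraphs $G_1,\dots,G_s$ together carry weight at most $W(G)/4$; recurse on each $G_i$ with parameter $2\eps$ to obtain a $\lceil 1/\eps\rceil$-coloring; the combined $(s\cdot\lceil 1/\eps\rceil)$-coloring is proper outside an edge set $F$ of weight at most $2\eps\cdot W(G)/4=\eps\, W(G)/2$, so one call to Lemma~\ref{lemma:simpleavgdefective} compresses it to $\lceil 2/\eps\rceil$ colors in $O(s/\eps)=O(1/\eps)$ rounds, contributing at most another $\eps\, W(G)/2$ of monochromatic weight. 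The recursion $T(\eps)\le T(2\eps)+O(1/\eps)+O(\log^* q)$ then telescopes geometrically to the claimed bound. The idea you are missing is that recursing on the \emph{monochromatic} subgraphs---whose total weight has shrunk by a constant factor---lets you \emph{loosen} the defect target at each recursive call rather than split a fixed budget across levels; this is what makes the per-level compression costs form a convergent geometric series summing to $O(1/\eps)$.
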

\begin{proof}
  First note that we can assume that $q>\lceil 2/\eps\rceil$ as otherwise, we can just output the initial proper $q$-coloring as the defective coloring. For $\eps>0$, let $T(\eps)$ denote the time that is required to compute a weighted $\eps$-relative average defective $\lceil 2/\eps\rceil$-coloring of a given properly $q$-colored weighted graph. To prove the claimed time bound of the lemma, we thus need to show that $T(\eps)= O(1/\eps + \log(1/\eps)\cdot\log^*q)$. We argue that $T(\eps)$ can be phrased recursively as follows. There exists a constant $c>0$ such that
  \begin{equation}\label{eq:def_recursion}
    T(\eps) \leq \begin{cases}
      T(2\eps) + c/\eps + O(\log^* q) & \text{ if } \eps<1/2\\
      O(\log^* q) & \text{ if } \eps \geq 1/2
    \end{cases}
  \end{equation}
  Let us first consider the case $\eps\geq 1/2$. In this case, we can first compute a weighted $(\eps/2)$-relative defective $O(1)$-coloring in time $O(\log^* q)$ and with messages of size $O(\log q)$ bits by using \Cref{lemma:defcoloring}. Let $H$ be the subgraph of $G$ that only consists of the bichromatic edges of $G$ w.r.t.\ this $O(1)$-coloring. By using \Cref{lemma:simpleavgdefective} (for $C=\lceil 2/\eps\rceil$), we can then compute an $(\eps/2)$-relative average defective $\lceil 2/\eps\rceil$-coloring of $H$ in time $O(1/\eps)=O(1)$ with messages of size $O(\log(1/\eps))=O(1)$ bits. This coloring has a weighted average relative defect of at most $\eps$ and we have thus shown that $T(\eps)=O(\log^* q)$ for $\eps\geq 1/2$. As discussed, the algorithm only requires messages of $O(\log q)$ bits.
  
  For $\eps<1/2$, we first use \Cref{lemma:defcoloring} to compute a weighted $(1/4)$-relative defective coloring of $G$ with $s=O(1)$ colors. This can be done in time $O(\log^* q)$ with messages of size $O(\log q)$ bits. Let $G_1,\dots,G_s$ be the subgraphs induced by $s$ color classes of this coloring. Note that by definition, the total weight of all the edges in graph $G_1,\dots,G_s$ is at most $W(G)/4$, where $W(G)$ is the total weight of all edges in $G$. For each graph $G_i$, we now recursively compute a weighted $2\eps$-relative average defective $\lceil 1/\eps\rceil$-coloring in time $T(2\eps)$. When combining the $s$-coloring of $G$ with these $\lceil1/\eps\rceil$-colorings of each graph $G_i$, we obtain a $s\cdot\lceil 1/\eps\rceil$-coloring of $G$. Let $F\subseteq E$ be the set of monochromatic edges of this coloring. Note an edge $e$ is in $F$ if $e$ is an edge of one of the graphs $G_i$ for $i\in[s]$ and if $e$ is monochromatic in the weighted $2\eps$-relative average defective $\lceil 1/\eps\rceil$-coloring of $G_i$. Because because the total weight of the graphs $G_1,\dots,G_s$ is at most $W(G)/4$, the total weight of all edges in $F$ is at most $2\eps\cdot W(G)/4 = \eps/2\cdot W(G)$. Note also that by definition of the edge set $F$, the computed $s\cdot\lceil 1/\eps\rceil$-coloring of $G$ is a proper coloring of the graph $H:=(V,E\setminus F)$. We can therefore use \Cref{lemma:simpleavgdefective} to compute a weighted $(\eps/2)$-relative average defective $\lceil 2/\eps\rceil$-coloring of $H$ in time $O(s/\eps)$ and with messages of size $O(\log(1/\eps)) = O(\log q)$ bits. The total weight of all monochromatic edges of this coloring in $H$ is at most a $(\eps/2)$-fraction of the total weight of all edges in $H$ and thus also at most $\eps/2\cdot W(G)$. Together with the edges in $F$, which also could be monochromatic, we therefore get an $\lceil 2/\eps\rceil$-coloring, where the total weight of all monochromatic edges is at most $\eps\cdot W(G)$ as required. This therefore proves the \Cref{eq:def_recursion}.
  
  The claim of the lemma now follows directly from \Cref{eq:def_recursion} and from the fact that, as discussed above, throughout the recursive algorithm, we only used messages of $O(\log q)$ bits.
\end{proof}

\paragraph{Comment.} The argument in the above lemma can be seen as an adaptation of the recursive approach to solve $(\Delta+1)$-coloring in $O(\Delta+\log^* n)$ time in \cite{Kuhn2009WeakColoring}. We remark that alternatively, an essentially equivalent result to \Cref{lemma:recursiveavgdefective} could also be proven by adapting the arbdefective coloring algorithm of Barenboim, Elkin, and Goldenberg~\cite{BEG18} to the weighted setting. 
\medskip

\noindent We now have the tools  to prove \Cref{lemma:recursiveavgdefective}. For completeness, we first restate the lemma.

\LemmaRecAvgDefective*

\begin{proof}
  First note that w.l.o.g., we can assume that $q> C/\delta$, as otherwise, the claim of the lemma directly follows from \Cref{lemma:simpleavgdefective}.
  In the following, we use $W(G):=\sum_{e\in E} w(e)$ to denote the total weight of all edges of a weighted graph $G$. Recall that to prove the lemma, we need to give a distributed algorithm to compute a $C$-coloring of $G$ such that the total weight of all the monochromatic edges is at most $\frac{1+\delta}{C}\cdot W(G)$. 
  
  As a first step, we use \Cref{lemma:defcoloring} to compute an $O((C/\delta)^2)$-coloring of the nodes of $G$ such that the total weight of all monochromatic edges is at most $\frac{\delta}{2C}\cdot W(G)$. By \Cref{lemma:defcoloring}, such a coloring can be computed in $O(\log^*q)$ time and by using messages of $O(\log q)$ bits. Let $F\subseteq E$ be the set of monochromatic edges of this coloring and let $H:=(V,E\setminus F)$ be the subgraph of $G$ consisting of the edges that are not monochromatic. Note that because $H$ only has bichromatic edges, the defective $O((C/\delta)^2)$-coloring that we have computed for $G$ is a proper coloring of $H$. For convenience, we define $\delta':=\delta/2$. To prove the lemma, it now suffices to compute a weighted $(1+\delta')/C$-relative average defective $C$-coloring of $H$. Even if all the edges in $F$ end up being monochromatic w.r.t.\ to this coloring, the total weight of all monochromatic edges is still at most $\frac{(1+\delta')}{C}\cdot W(H) + w(F)$. From $W(H)\leq W(G)$, $\delta'=\delta/2$ and $w(F)\leq \frac{\delta}{2C}\cdot W(G)$, this immediately implies that the number of monochromatic edges is at most $\frac{1+\delta}{C}\cdot W(G)$.
  
  It therefore remains that we can compute a weighted $(1+\delta')/C$-relative average defective coloring of a graph $H$ with an initial proper $O((C/\delta)^2)$-coloring in the required round complexity and with the required message size. The algorithm to achieve this consists of two steps. First, we apply \Cref{lemma:simplerecursivedefective} to compute a weighted $\delta'/C$-relative defective coloring of $H$ with $\gamma=O(C/\delta')$ colors in time $O(C/\delta + \log(C/\delta)\cdot \log^*(C/\delta))=O(C/\delta)$ with messages of $O(\log(C/\delta))=O(\log q)$ bits. The total weight of all the monochromatic edges of this coloring is at most $\delta'/C\cdot W(H)$. Let $H'$ be the subgraph of $H$ that only consists of the bichromatic edges of this $\gamma$-coloring of $H$. We next apply \Cref{lemma:simpleavgdefective} to compute a weighted $1/C$-relative average defective $C$-coloring of $H'$ in time $O(\gamma)=O(C/\delta)$ and with messages of size $O(\log(C/\delta))=O(\log q)$ bits. When using this coloring on $H$, the total weight of all monochromatic edges is at most $W(H')/C + \delta'\cdot W(H)/C \leq (1+\delta')/C\cdot W(H)$ as required.
\end{proof}

\bibliographystyle{alpha}
\bibliography{ref}

\end{document}